\pdfoutput=1
\RequirePackage{ifpdf}
\ifpdf % We~are running pdfTeX in pdf mode
\documentclass[pdftex]{sigma}
\else
\documentclass{sigma}
\fi

\usepackage{pdfsync}
\usepackage{enumitem}
\usepackage{dsfont}
\usepackage{mathrsfs}
\catcode`\@=11
\usepackage{empheq}
\usepackage{slashed}
\usepackage{nicefrac,xfrac}

\numberwithin{equation}{section}

\newtheorem{Theorem}{Theorem}[section]
\newtheorem*{Theorem*}{Theorem}

\newtheorem{Lemma}[Theorem]{Lemma}

 { \theoremstyle{definition}

 }

\begin{document}
\allowdisplaybreaks

\newcommand{\arXivNumber}{2303.11461}

\renewcommand{\PaperNumber}{086}

\FirstPageHeading

\ShortArticleName{Unitarity of the SoV Transform for $\mathrm{SL}(2,\mathbb C)$ Spin Chains}

\ArticleName{Unitarity of the SoV Transform\\ for $\boldsymbol{\mathrm{SL}(2,\mathbb C)}$ Spin Chains}

\Author{Alexander N.~MANASHOV}

\AuthorNameForHeading{A.N.~Manashov}

\Address{Max-Planck-Institut f\"ur Physik, Werner-Heisenberg-Institut, 80805 M\"unchen, Germany}
\Email{\href{mailto:alexander.manashov@desy.de}{alexander.manashov@desy.de}}

\ArticleDates{Received March 30, 2023, in final form October 20, 2023; Published online November 04, 2023}

\Abstract{We prove the unitarity of the separation of variables transform for $\mathrm{SL}(2,\mathbb C)$ spin chains by a method based on the use of Gustafson integrals.}

\Keywords{spin chains; separation of variables; Gustafson's integrals}

\Classification{33C70; 81R12}

\section{Introduction}
\label{sect:intr}

Theory of quantum integrable models is an important part of modern theoretical physics. The~solution of such models relies on the Quantum
inverse scattering method (QISM) which includes such techniques as the algebraic Bethe ansatz (ABA)~\cite{MR549615} and separation of
variables (SoV)~\cite{MR802110,MR1239668}. The ABA allows one to effectively calculate energies and eigenstates of integrable models and
to address more complicated problems such as calculating norms~\cite{MR677006}, scalar products~\cite{MR1007797} and correlation
functions~\cite{MR763763,MR1741654}. Models with infinite-dimensional Hilbert spaces without a pseudo-vacuum state, the Toda
chain~\cite{MR626693} being the most famous example, are, however, beyond ABA's grasp. The solution of such models relies on the SoV
method proposed by Sklyanin~\cite{MR802110,MR1239668}. The method consists in constructing a map between the original Hilbert space,
${\mathbb H}_{\rm org}$, in which the model is formulated, and an auxiliary Hilbert space, ${\mathbb H}_{\rm SoV}$. This map is constructed
in such a way that a multidimensional spectral problem associated with the original Hamiltonian is reduced to a one-dimensional problem on
an auxiliary Hilbert space which usually takes the form of the Baxter $T$-$Q$ relation. Technically constructing the SoV representation is
equivalent to finding the eigenfunctions of an element of the monodromy matrix associated with the model. For the Toda chain it was done by
Kharchev and Lebedev \cite{MR1751619,Kharchev:2000yj}. Later, a regular method for obtaining eigenfunctions for models with an $R$-matrix of
the rank one\footnote{In recent years, significant progress has been made in constructing SoV representations for higher rank
finite-dimensional models, see~\cite{Cavaglia:2019pow,Derkachov:2018ewi,GromovRyan20,GromovSizov17,Gromov:2022waj,MailletNiccoli18,MR3983970,MailletNiccoli19,Maillet:2020ykb,Ryan:2021duf,RyanVolin19,
Ryan:2020rfk,Valinevich:2016cwq}.} was developed in~\cite{Derkachov:2001yn}, and at
present the SoV representation is known for a number of
models~\cite{BytskoTeschner06,Derkachov:2003qb,Derkachov:2002tf,Derkachov:2014gya,Silantyev}.

In order to be sure that the spectral problems in the original and auxiliary Hilbert spaces are equivalent, it is necessary to show that the
corresponding map, ${\mathbb H}_{\rm SoV}\mapsto {\mathbb H}_{\rm org}$, is unitary (or that the eigenfunctions form a complete set in
${\mathbb H}_{\rm org}$). If $\dim {\mathbb H}_{\rm org} < \infty$ the problem can be solved, at least in principle, by counting the
dimensions of the Hilbert spaces. For the models with infinite-dimensional Hilbert space, such as the Toda chain, the noncompact
$\mathrm{SL}(2,\mathbb C)$ spin chain, etc., the task becomes more difficult. For the Toda chain, unitarity was first established by using
harmonic analysis of Lie groups techniques~\cite{Semenov-Tian-Shansky1994,Wallach92}. However, this method is quite sophisticated and can
hardly be generalized to more complicated cases. The rigorous proof of the unitarity of the SoV transform for the Toda chain based on the
use of natural objects for the QISM was given by Kozlowski~\cite{Kozlowski15}. This technique was later applied to the modular $XXZ$
magnet~\cite{Derkachov:2018lyz}. Later it was realized~\cite{Derkachov:2016dhc} that there exists a close relation between
$\mathrm{SL}(2,\mathbb R)$ symmetric spin chains and the multidimensional Mellin--Barnes integrals studied by
Gustafson~\cite{Gustafson92,Gustafson94} that allowed to greatly simplify the proof of the unitarity of the SoV transform for
$\mathrm{SL}(2,\mathbb R)$ symmetric spin chains~\cite{Derkachov:2021wja}.

In the present paper, we apply this technique to the analysis of the noncompact spin chains with the $\mathrm{SL}(2,\mathbb C)$ symmetry
group. Such models appear in the studies of the Regge limit of scattering amplitudes in gauge theories, in QCD in
particular~\cite{Bartels:2011nz,Faddeev:1994zg,Lipatov:1993qn,Lipatov:1993yb,Lipatov:2009nt}, see
also~\cite{Derkachov:2018rot,Derkachov:2019tzo,Derkachov:2021rrf,Derkachov:2020zvv} for recent developments. The SoV representation for the
$\mathrm{SL}(2,\mathbb C)$ spin chains\footnote{To the best of our knowledge, the completeness of this representation has not yet been
addressed.}
 was constructed in~\cite{Derkachov:2001yn} while the generalization of Gustafson integrals relevant
for the $\mathrm{SL}(2,\mathbb C)$ spin chains was obtained recently in~\cite{Derkachov:2019ynh,Sarkissian:2023rwb}. Based on these
results, we present below a proof of unitarity of the SoV transform for a generic $\mathrm{SL}(2,\mathbb C)$ spin chain.

The paper is organized as follows. In Section~\ref{preliminaries}, we recall elements of the QISM relevant for further analysis. The
eigenfunctions of the elements of the monodromy matrix are constructed in Section~\ref{sect:eigenfunctions}. In Section~\ref{sect:etc}, we
calculate several scalar products of the eigenfunctions and discuss their properties. Section~\ref{sect:SoV} contains the proof of unitarity
of the SoV transform. Section~\ref{sect:summary} is reserved for a summary and several appendices contain a discussion of technical details.

\section[SL(2,C) spin chains]{$\boldsymbol{\mathrm{SL}(2,\mathbb C)}$ spin chains}\label{preliminaries}

Spin chains are quantum mechanical systems whose dynamical variables are spin generators. We consider models with
spin generators belonging to the unitary continuous principal series representation, $\mathrm T^{(s_k,\bar s_k)}$, of the unimodular group
of complex two by two matrices. Namely, each site of the chain is equipped with two sets of generators, holomorphic ($S^\alpha$) and
anti-holomorphic ones~$\big(\bar S^\alpha\big)$,
\begin{alignat*}{4}
&S^-_k=-\partial_{z_k}, \qquad&& S^0_k=z_k\partial_{z_k} + s_k, \qquad&& S^+_k=z_k^2\partial_{z_k} + 2 s_k z_k,&\\
&\bar S^-_k=-\partial_{\bar z_k},\qquad&&\bar S^0_k=\bar z_k \partial_{\bar z_k} + \bar s_k, \qquad&& \bar S^+_k=\bar z_k^2
 \partial_{\bar z_k} + 2\bar s_k \bar z_k.&
\end{alignat*}
The generators $S^\alpha_k\big(\bar S^\alpha_k\big)$ satisfy the standard $\mathfrak{sl}(2)$ commutation relations, while the generators at different
sites and holomorphic and anti-holomorphic generators commute, $\big[S^\alpha_k, \bar S^{\alpha'}_k\big]=0$. The parameters $s_k$, $\bar s_k$
specifying the representation take the form~\cite{MR0207913}
\begin{align*}
s_k=\frac{1+n_k}{2}+{\rm i}\rho_k, \qquad \bar s_k=\frac{1- n_k}{2}+{\rm i}\rho_k,
\end{align*}
where $n_k$ is an integer or half-integer number and $\rho_k$ is real, so that
\begin{align*}
s_k+\bar s_k^*=1 \qquad \text{and}\qquad s_k-\bar s_k=n_k\in \mathbb Z/2.
\end{align*}
The later condition comes from the requirement for the finite group transformations to be well defined while the former one guarantees the
unitary character of transformations and anti-hermiticity of
the generators, \smash{$\big(S_k^\alpha\big)^\dagger=-\bar S_k^\alpha$}.

The Hilbert space of the model is given by the direct product of the Hilbert spaces at each node. For a chain of length $N$, ${\mathbb
H}_N=\bigotimes_{k=1}^N \mathcal H_k$, where $\mathcal H_k = L_2(\mathbb C)$.

In the QISM~\cite{MR671263,MR1239668,MR549615,MR562799}, the dynamics of the model is determined by a family of mutually commuting
operators.
 Namely, one defines the so-called $L$-operators,
\begin{align*}
L_k(u)= u + {\rm i}\begin{pmatrix}
S_k^0 & S_k^-\vspace{1mm}\\
S_k^+ & -S_0^-
\end{pmatrix},
\qquad
\bar L_k(\bar u)= \bar u + {\rm i} \begin{pmatrix}
\bar S_k^0 & \bar S_k^-\vspace{1mm}\\
\bar S_k^+ & -\bar S_0^-
\end{pmatrix},
\end{align*}
which are the basic building blocks in the QISM. The complex variables $u$, $\bar u$ are called spectral parameters. The next important
object -- a monodromy matrix -- is given by the product of $L$ operators
\begin{gather}
T_N(u) = L_1(u+\xi_1)L_2(u+\xi_2)\cdots L_N(u+\xi_N), \notag\\
\bar T_N(\bar u) = \bar L_1\big(\bar u+\bar \xi_1\big)\bar L_2\big(\bar u+\bar \xi_2\big)\cdots \bar L_N\big(\bar u+\bar \xi_N\big),\label{monodromymatrix}
\end{gather}
where $\xi_k$, $\bar \xi_k$ are the so-called impurity parameters.\footnote{As it can already be noticed any formula in the holomorphic sector has its exact copy in the anti-holomorphic one. Therefore, from now on we write explicitly only holomorphic formulae tacitly implying its anti-holomorphic counterparts.} The entries of the monodromy matrix,
\begin{align*}
T_N(u)=\begin{pmatrix}
A_N(u) & B_N(u)\\
C_N(u) & D_N(u)
\end{pmatrix},
\end{align*}
are polynomials in $u$ with the operator valued coefficients, e.g.,
\begin{gather}
A_N(u)= u^N + u^{N-1}\big({\rm i} S^0+\Xi\big) + \sum_{k=2}^{N} u^{N-k} a_k,\nonumber\\
B_N(u)= u^{N-1} {\rm i} S^- + \sum_{k=2}^{N} u^{N-k} b_k,\label{ANBN}
\end{gather}
where $\Xi=\sum_{k=1}^N \xi_k$ and $S^0$, $S^-$ are the total generators,
\begin{align*}%\label{totalS}
S^\alpha = S^{\alpha}_1 + \cdots + S^\alpha_N.
\end{align*}
The entries of the monodromy matrix form commuting operator families~\cite{MR1616371,MR549615}
\begin{align*}
[A_N(u),A_N(v)]=[B_N(u),B_N(v)]=[C_N(u),C_N(v)]=[D_N(u),D_N(v)] =0.
\end{align*}
In particular, each entry commutes with the corresponding total generator, $S^\alpha$,
\begin{align*}
\big[S^0,A_N(u)\big]=\big[S^0,D_N(u)\big]=0 \qquad\text{and}\qquad[S^-,B_N(u)]=\big[S^+,C_N(u)\big]=0.
\end{align*}
The same equations hold for the anti-holomorphic operators $\bar A_N$, $\bar B_N$, $\bar C_N$, $\bar D_N$ and, of course, the holomorphic and
anti-holomorphic operators commute. Moreover it can be checked that if the impurity parameters satisfy the constraint $\bar \xi_k=\xi_k^*$
for all $k$, the following relations between holomorphic and anti-holomorphic operators hold:
\begin{align*}
(A_N(u))^\dagger = \bar A_N(u^*), \qquad (B_N(u))^\dagger = \bar B_N(u^*),
\end{align*}
etc. This ensures that the operators $a_k$ and $\bar a_k$ in the expansion of $A_N(u)$, \eqref{ANBN}, and $\bar A_N(u)$, are adjoint
to each other $a_k^\dagger =\bar a_k$ \big($b_k^\dagger =\bar b_k$ etc.\big).

The commutativity of the operators $A_N(u)$, $B_N(u)$, $C_N(u)$, $D_N(u)$ implies that the following families of self-adjoint operators:
\begin{gather*}
\mathfrak A_N=\big\{{\rm i}S^0, {\rm i}\bar S^0, a_k+\bar a_k, {\rm i}(a_k-\bar a_k), k=2,\dots,N\big\},
\\
\mathfrak B_N=\big\{{\rm i}S^-, {\rm i}\bar S^-, b_k+\bar b_k, {\rm i}(b_k-\bar b_k), k=2,\dots,N\big\},
\end{gather*}
(and similarly for others) are commutative and can be diagonalized simultaneously.\footnote{The impurity parameters must also satisfy
the condition ${\rm i}\big(\xi_k -\bar \xi_k\big)=r_k$, where $r_k$ are half-integers.} The corresponding eigenfunctions provide a convenient basis --
Sklyanin's representation of Separated Variables (SoV) -- for the analysis of spin chain models~\cite{MR1239668}.

The operators $B_N$ and $C_N$, ($A_N$ and $D_N$) are related to each other by the inversion transformation, see~\cite{Derkachov:2014gya} for detail, so it is sufficient to construct eigenfunctions for the operators~$B_N$ and $A_N$. The
eigenfunctions of $B_N$ for the homogeneous chain were constructed in~\cite{Derkachov:2001yn} and later on for the operator
$A_N$~\cite{Derkachov:2014gya}. Extending this approach to the inhomogeneous case is rather straightforward.

\section{Eigenfunctions}\label{sect:eigenfunctions}

In this section, we present explicit expressions for the eigenfunctions of the operators $B_N$ and~$A_N$ for a generic inhomogeneous spin
chain with impurities. We start with the operator $B_N$ where the construction follows the lines of~\cite{Derkachov:2001yn} with
minimal modifications.

\subsection[B\_N operator]{$\boldsymbol{B_N}$ operator}\label{subs:BN}

Let $\Lambda_n$ be an integral (layer) operator which maps functions of $n-1$ variables into functions of $n$ variables and depends on
the spectral parameters $x$, $\bar x$ and the complex vectors $\gamma$, $\bar \gamma$ of dimension $2n-2$
\begin{gather}
[\Lambda_n(x|\gamma)f](z_1,\dots,z_n)\nonumber\\
\qquad{} =\idotsint \Lambda_n(x|\gamma)(z_1,\dots,z_n|w_1,\dots,w_{n-1})
f(w_1,\dots,w_{n-1})\prod_{k=1}^{n-1} {\rm d}^2 w_k.\label{LNoperator}
\end{gather}
The kernel is given by the following expression:
\begin{gather*}%\label{LNkernel}
\Lambda_n(x|\gamma)(z_1,\dots,z_n|w_1,\dots,w_{n-1}) =\prod_{k=1}^{n-1} D_{\gamma_{2k-1}-{\rm i}x}(z_{k}-w_{k})
D_{\gamma_{2k}+{\rm i}x}(z_{k+1}-w_{k}),
\end{gather*}
where the function $D_\alpha(z)$ (propagator) is defined as follows:
\begin{gather}\label{Dalpha}
D_{\alpha}(z)\equiv D_{\alpha,\bar\alpha}(z,\bar z) = {z^{-\alpha} \bar z^{-\bar\alpha}}.
\end{gather}
We will assume that the indices $\alpha$, $\bar\alpha$ satisfy the condition $[\alpha] \equiv \alpha-\bar\alpha\in \mathbb Z$ so that the
propagator is a single-valued function on the complex plane. It implies that the parameters $\gamma_k$ and $x$ have the form
\begin{align}\label{gammaxform}
\gamma_k=\frac12+\frac{r_k}{2}+{\rm i}\sigma_k, \qquad \bar \gamma_k=\frac12- \frac{r_k}{2}+{\rm i}\sigma_k, \qquad
x=\frac{{\rm i}m}2 + \nu, \qquad \bar x=-\frac{{\rm i}m}2 + \nu.
\end{align}
The numbers $\{m, r_1,\dots, r_{2N-2}\}$ are either integer or half-integer and depending on this we call the corresponding variables
integer (half-integer).
 The continuous parameters $\sigma_k$ and $\nu$ are subject to the constraints
\begin{align*}
\operatorname{Im}(\sigma_{2k+1} - \nu)> -\nicefrac 12 \qquad\text{and}\qquad \operatorname{Im}(\sigma_{2k} + \nu)> - \nicefrac12,
\end{align*}
which guarantee the convergence of the integral~\eqref{LNoperator} for a smooth function $f$ with finite support.
In the case we are most interested in, $\gamma_k + \bar\gamma_k=1$, the parameters $\sigma_k\in \mathbb{R}$, and the variable $\nu$ lies
in the strip $-\nicefrac12<\operatorname{Im}\nu<\nicefrac12$.

The operators $\Lambda_n$ possess two important properties:
\begin{enumerate}\itemsep=0pt
\item[(i)]
 Let $\rho$ be a map which takes $M$-dimensional vectors
\begin{align*}
\gamma=(\gamma_1,\dots,\gamma_{M}), \qquad
\bar \gamma=(\bar \gamma_1,\dots,\bar \gamma_{M})
\end{align*}
to vectors of dimension $M-2$ as follows:
\begin{align*}
\rho \gamma=(\gamma'_2,\gamma'_3,\dots,\gamma'_{M-1}), \qquad
\rho \bar \gamma=(\bar \gamma'_2,\bar \gamma'_3,\dots,\bar \gamma'_{M-1}),
\end{align*}
where $a' \equiv 1-a$. It can be shown that the operators $\Lambda_n$ and $\Lambda_{n-1}$ obey the following exchange relation:
\begin{align}\label{Bexchange}
\Lambda_n(u|\gamma) \Lambda_{n-1}(v|\rho \gamma) & = \omega_n(\gamma, u,v) \Lambda_n(v|\gamma) \Lambda_{n-1}(u|\rho \gamma).
\end{align}
Here $\gamma (\bar\gamma)$ is $(2n-2)$-dimensional vector and the factor $\omega_n$ is given by the following expression:
\begin{align}\label{omegan}
\omega_n(\gamma, u,v) & = \prod_{m=1}^{n-1}
\boldsymbol \Gamma\left[
\frac{\gamma_{2m-1}-{\rm i}v,\bar\gamma_{2m}+{\rm i}\bar v}
{\gamma_{2m-1}-{\rm i}u,\bar\gamma_{2m}+{\rm i}\bar u}\right]
=
\prod_{m=1}^{n-1} \boldsymbol \Gamma\left[
\frac{\bar \gamma_{2m-1}-{\rm i}\bar v,\gamma_{2m}+{\rm i} v}
{\bar \gamma_{2m-1}-{\rm i}\bar u,\gamma_{2m}+{\rm i} u}\right],
\end{align}
where
\begin{align*}
\boldsymbol \Gamma\left[\frac{a_1,a_2,\dots,a_n}{b_1,b_2,\dots,b_m}\right]\equiv
\frac{\prod_{k=1}^n\boldsymbol\Gamma[a_k]}
{\boldsymbol\prod_{k=1}^m\boldsymbol\Gamma[b_k]}
\end{align*}
and $\boldsymbol \Gamma$ is the Gamma function of the complex field $\mathbb C$~\cite{MR2125927}
\begin{align*}
\boldsymbol\Gamma[u] \equiv \boldsymbol\Gamma[u,\bar u] =
{\Gamma(u)}/{\Gamma(1-\bar u)}.
\end{align*}
The relation~\eqref{Bexchange} is a direct consequence of the exchange relation for the propagators, see~\eqref{exchange-rel}. Its proof
is exactly the same as for the homogeneous spin chain. For more details, see~\cite{Derkachov:2001yn,Derkachov:2014gya}.

\item[(ii)] Let us choose the vector $\gamma$ as follows
\begin{align}
\gamma =(s_1-{\rm i}\xi_1,s_2+{\rm i}\xi_2,s_2-{\rm i}\xi_2,\dots,s_{N-1} + {\rm i}\xi_{N-1},s_{N-1} - {\rm i}\xi_{N-1},s_N + {\rm i}\xi_N),
\notag\\
\bar\gamma =(\bar s_1-{\rm i}\bar\xi_1,\bar s_2+{\rm i}\bar\xi_2,\bar s_2-{\rm i}\bar\xi_2,\dots,\bar s_{N-1} + {\rm i}\bar\xi_{N-1},
\bar s_{N-1} - {\rm i}\bar\xi_{N-1},\bar s_N + {\rm i}\bar \xi_N),\label{gammaN}
\end{align}
where $s_k$ and $\xi_k$ are the spins and impurity parameters of the spin chain, respectively. For such a choice of the vector $\gamma$,
the operator $B_N(x)$ annihilates $\Lambda_N(x|\gamma)$~\cite{Derkachov:1999pz,Derkachov:2001yn}
\begin{align}\label{Beq}
B_N(x) \Lambda_N(x|\gamma)=0.
\end{align}
\end{enumerate}

Let us define a function
\begin{align*}
%\label{PsiBdefinition}
\Psi^{(N)}_{p,x}(z)
& \equiv \Psi^{(N)}_{p,x_1,\dots,x_{N-1}} (z_1,\dots, z_N)\\
& = \pi^{-N^2/2} |p|^{N-1} \int {\rm d}^2 z U_{x_1,\dots,x_{N-1}}(z_1,\dots,z_N| z)
{\rm e}^{{\rm i} (p z +\bar p \bar z)},
\end{align*}
where the kernel $ U_{x_1, \dots, x_{N-1} }$ is given by the product of the layer operators,
\[
U_{x_1,\dots,x_{N-1}} =
 \varpi (x|\gamma)
\Lambda_N (x_1|\gamma)
 \Lambda_{N-1}(x_2|\rho\gamma)
 \Lambda_{N-2}\big(x_3|\rho^2\gamma\big)
 \cdots\Lambda_2\big(x_{N-1}|\rho^{N-2}\gamma\big),
\]
and $\gamma $ is given by~\eqref{gammaN}.
Equation~\eqref{Bexchange} guarantees that $U_{x_1,\dots,x_{N-1}}\sim U_{x_{i_1},\dots,x_{i_{N-1}}}$ for any permutation of
$x_1,\dots, x_{N-1}$. The kernel $U_x$ becomes totally symmetric for the following choice of the prefactor $\varpi (x|\gamma)$:
\begin{align}\label{varphifactor}
\varpi({x}|\gamma) &=\varpi(x_1,\dots,x_{N-1}|\gamma)=
\prod_{m=1}^{N-1} \prod_{k=1}^{m}\varpi_1\big(x_k|\rho^{m-1}\gamma\big),
\end{align}
where
\begin{align*}
\varpi_1(x|\gamma)=\varpi_1(x|\gamma_1,\dots,\gamma_{2n}) =
 \prod_{m=1}^{n} \boldsymbol\Gamma[\gamma_{2m-1}-{\rm i}x,\bar\gamma_{2m} + {\rm i}\bar x].
\end{align*}
Thus the function $ \Psi^{(N)}_{p,x_1,\dots,x_{N-1}} $ is a symmetric function of the variables $x_1,\dots,x_{N-1}$.
 Together with \eqref{Beq} it implies that
\begin{align*}
B_N(x_k)\Psi^{(N)}_{p,x_1,\dots,x_{N-1}} =0 \qquad \text{for}\quad k=1,\dots, N-1.
\end{align*}
Invariance of the kernel $U_{x_1m\dots,x_{N-1}}(z_1,\dots, z_N|z)$ under shifts
\begin{align*}
U_{x_1\dots,x_{N-1}}(z_1+w,\dots, z_N+w|z+w) =U_{x_1\dots,x_{N-1}}(z_1,\dots, z_N|z)
\end{align*}
results in
\begin{align}\label{Seq}
{\rm i}S^{-} \Psi^{(N)}_{p,x_1,\dots,x_{N-1}} = p \Psi^{(N)}_{p,x_1,\dots,x_{N-1}}, \qquad
{\rm i}\bar S^{-} \Psi^{(N)}_{p,x_1,\dots,x_{N-1}} = \bar p \Psi^{(N)}_{p,x_1,\dots,x_{N-1}}.
\end{align}
It follows then from equations~\eqref{ANBN}, \eqref{Beq} and \eqref{Seq} that\footnote{We recall that the variables $x_k$, $\bar x_k$,
$k=1,\dots, N-1$ take the form
$x_k={{\rm i}n_k}/{2}+\nu_k$, $ \bar x_k=-{{\rm i}n_k}/{2}+\nu_k$,
where, depending on the spin and impurities parameters, all $n_k$ are either integer or half-integer numbers. }
\begin{align*}
B_N(u)\Psi^{(N)}_{p,x}(z)=p\prod_{k=1}^{N-1}(u-x_k)\Psi^{(N)}_{p,x}(z), \qquad
\bar B_N(\bar u)\Psi^{(N)}_{p,x}(z)= \bar p\prod_{k=1}^{N-1}(\bar u-\bar x_k)\Psi^{(N)}_{p,x}(z).
\end{align*}

For $N=1$, the functions $\Psi_{p}^{(1)}(z,\bar z) = \pi^{-1/2} {\rm e}^{{\rm i}(p z + \bar p\bar z)}$ form the complete orthonormal system in~${\mathbb H}_1=L_2(\mathbb C)$. The aim of this paper is to extend this statement to $N>1$. Namely, we will show in Section~\ref{sect:SoV}
that if the spins and impurities parameters of the spin chain obey the ``unitarity'' condition,
\begin{align}\label{gammaunitarity}
\gamma_k+\bar\gamma_k^*=1,
\end{align}
for all $k$ ($\gamma_k$ has the form~\eqref{gammaxform} with $\sigma_k\in\mathbb R$ ) then the set of functions $\big\{\Psi^{(N)}_{p,x},
x_k=\bar x_k^* (\nu_k\in \mathbb R),\allowbreak k=1,\dots, N-1\big\}$ is complete in ${\mathbb H}_N=(\bigotimes L_2(\mathbb C))^N$.

Note that the functions $\Psi_{p,x}^{(N)}$ are well defined for the complex parameters $\nu_k$ in the vicinity of the real line. For
further analysis, it will be useful to consider regularized functions, \smash{$\Psi_{p,x}^{(N),\epsilon}$}, by relaxing the last of the
conditions~\eqref{gammaunitarity} to $\gamma_{2N-2}+\bar\gamma_{2N-2}^*=1+2\epsilon$. This can be achieved by shifting the impurity
parameter $\xi_N\to\xi_N - {\rm i}\epsilon$,\footnote{Of course, one also can regularize the function by shifting the parameter $\gamma_1$
instead of $\gamma_{2N-2}$, $\gamma_{1}+\bar\gamma_{1}^*=1+2\epsilon$.} i.e.,
\begin{align}\label{Psiepsilon}
\Psi_{p,x}^{(N),\epsilon}(z) \overset{\text{def}}{=} \Psi_{p,x}^{(N)}(z)\Big|_{\xi_N\to\xi_N - {\rm i}\epsilon}.
\end{align}

\subsection[A\_N operator]{$\boldsymbol{A_N}$ operator}\label{subs:AN}

Construction of the eigenfunctions of the operator $A_N$ follows the scheme described in the previous subsection. We define a layer
operator $ \Lambda^\prime_n$ which maps functions of $n-1$ variables into functions of $n$ variables
\begin{gather*}%\label{LNoperatorA}
[\Lambda^\prime_n(x|\gamma)f](z_1,\dots,z_n) \\
\qquad = \idotsint \Lambda^\prime_n(x|\gamma)(z_1,\dots,z_n|w_1,\dots,w_{n-1}) f(w_1,\dots,w_{n-1})\prod_{k=1}^{n-1} {\rm d}^2 w_k,
\end{gather*}
where the kernel is given by the following expression:
\begin{gather*}%\label{LNprimekernel}
\Lambda^\prime_n(x|\gamma)(z_1,\dots,z_n|w_1,\dots,w_{n-1})\\
\qquad{}= D_{\gamma_{2n-1}-{\rm i}x}(z_{n}) \prod_{k=1}^{n-1} D_{\gamma_{2k-1}-{\rm i}x}(z_{k}-w_{k}) D_{\gamma_{2k}+{\rm i}x}(z_{k+1}-w_{k}).
\end{gather*}
The layer operator $ \Lambda^\prime_n$ depends on the spectral parameters $x(\bar x)$ and the vector $\gamma(\bar \gamma)$ of dimension~$2n-1$ which have the form~\eqref{gammaxform}.
These operators satisfy the exchange relation{\samepage
\begin{align*}
\Lambda^\prime_n(u|\gamma)\Lambda^\prime_{n-1}(v|\rho \gamma) & = \omega_n(\gamma, u,v)
\Lambda^\prime_n(v|\gamma)\Lambda^\prime_{n-1}(u|\rho \gamma),
\end{align*}
and the factor $\omega_n$ is defined in~\eqref{omegan}.}

Let $\Phi^{(N)}_{x}(z)$ be the following function:
\begin{gather*}
\Phi^{(N)}_{x}(z) \equiv \Phi^{(N)}_{x_1,\dots,x_{N}}(z_1,\dots, z_N)\\
\phantom{\Phi^{(N)}_{x}(z) }{} = \pi^{-N^2/2}
\varpi (x|\gamma)\big[
\Lambda^\prime_N (x_1|\gamma)
 \Lambda^\prime_{N-1}(x_2|\rho\gamma)
 \dots
 \Lambda^\prime_1\big(x_{N}|\rho^{N-1}\gamma\big)\big](z_1,\dots,z_N),
\end{gather*}
where $\gamma$ is $(2N-1)$-dimensional vector and the prefactor $\varpi$ is given by equation~\eqref{varphifactor}. For such a~choice of
$\varpi$ the function $\Phi^{(N)}_{x}$ is a symmetric function of the variables $x_1,\dots,x_N$.

It can be shown that the operator $A_N(x)$ annihilates the layer operator $\Lambda^\prime_N(x|\gamma)$,
\begin{align*}%\label{Aeq}
A_N(x) \Lambda^\prime_N(x|\gamma)=0,
\end{align*}
for the following choice of the vector $\gamma$:
\begin{align*}%\label{gammaNA}
\gamma =(s_1-{\rm i}\xi_1,s_2+{\rm i}\xi_2,s_2-{\rm i}\xi_2,\dots,
s_N + {\rm i}\xi_N, s_N - {\rm i}\xi_N),\\
\bar\gamma =(\bar s_1-{\rm i}\bar\xi_1,\bar s_2+{\rm i}\bar\xi_2,\bar s_2-{\rm i}\bar\xi_2,\dots,
\bar s_N + {\rm i}\bar \xi_N,\bar s_N - {\rm i}\bar \xi_N).
\end{align*}
Taking into account polynomiality of $A_N(u)$, see equation \eqref{ANBN}, one obtains
\begin{align*}
A_N(u)\Phi^{(N)}_{x}(z)=\prod_{k=1}^{N}(u-x_k)\Phi^{(N)}_{x}(z), \qquad
\bar A_N(\bar u)\Phi^{(N)}_{x}(z)= \prod_{k=1}^{N}(\bar u-\bar x_k)\Phi^{(N)}_{x}(z).
\end{align*}
Again, the variables $x_k$, $\bar x_k$ are integers (half-integers) for all $k$. We will show that these functions, $\{\Phi^{(N)}_{x}(z), \,
x_k=\bar x_k^*, \, k=1,\dots, N \}$, form a complete set in the Hilbert space ${\mathbb H}_N$.

\section{Scalar products, momentum representation, etc.}\label{sect:etc}

\begin{figure}[t]\centering
\includegraphics[width=0.72\linewidth]{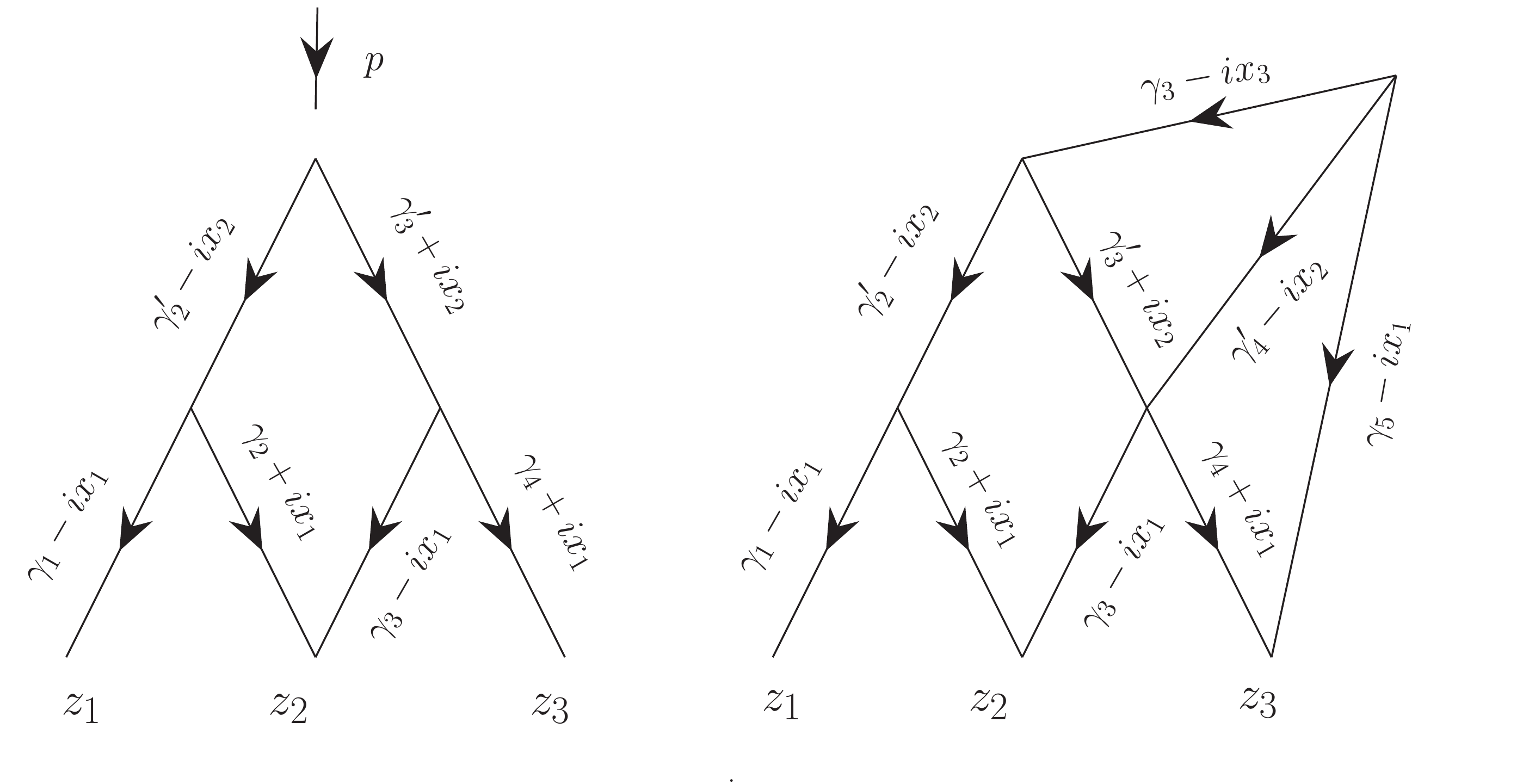}
\caption{The diagrammatic representation for the function $\Psi$ (left) and $\Phi$ (right) for $N=3$.
The arrow
from $z$ to $w$ with an index $\alpha$ stands for the propagator $D_{\alpha}(z-w)$, equation~\eqref{Dalpha}.}\label{diag:examples}
\end{figure}

The functions constructed in the previous section are given by multidimensional integrals. In this section, we show that these integrals
converge for the parameters $\nu_k$ in the vicinity of real axis. To this end, it will be quite helpful, as was advocated in~\cite{Derkachov:2001yn}, to visualize the integrals as Feynman diagrams. The examples for $N=3$ are shown in
Figure~\ref{diag:examples}.
It will be convenient to convert diagrams (functions) to momentum space %%
\begin{align*}
\Psi(z_1,\dots, z_N)={\pi^{-N}}\idotsint \widetilde \Psi(p_1,\dots,p_N) {\rm e}^{{\rm i} \sum_{k=1}^N (p_kz_k + \bar p_k \bar z_k)} {\rm d}^2p_1\cdots {\rm d}^2p_N.
\end{align*}
In momentum space the function $\Psi^{(N),\epsilon}_{p, x}$, equation~\eqref{Psiepsilon}, takes the form
\begin{align*}
\widetilde \Psi^{(N),\epsilon}_{p, x}(p_1,\dots, p_N) = \delta^{(2)}\left(p-\sum_{k=1}^Np_k\right)
\Psi^{(N),\epsilon}_{x}(p_1,\dots, p_N).
\end{align*}
Let us remark here that the ``$\epsilon$'' regularization is reduced to a multiplication by the factor $(p_N \bar p_N)^\epsilon$
\begin{align}\label{Psiespilon}
\Psi^{(N),\epsilon}_{x}(p_1,\dots, p_N) & =(p_N \bar p_N)^\epsilon \Psi^{(N)}_{x}(p_1,\dots, p_N).
\end{align}
The function $\Psi^{(N),\epsilon}_{x}$ can be read from the Feynman diagram in Figure~\ref{diag:examples} as follows:
\begin{align}\label{PsiNdefinition}
\Psi^{(N),\epsilon}_{x}(p_1,\dots,p_N) =\idotsint \mathcal J_x^{\epsilon}(\{p_k\},\{\ell_{ij}\}) \prod_{1\leq j\leq i\leq N-2} {\rm d}^2\ell_{ij},
\end{align}
with the integrand $\mathcal J_x^{\epsilon}(\{p_k\},\{\ell_{ij}\})$ given by the product of the propagators, $D_\alpha(k)$. Up to a~momentum independent factor
\begin{align*}
\mathcal J_x^{\epsilon}(\{p_k\},\{\ell_{ij}\})\simeq \prod_{k=1}^{N-1}\prod_{j=1}^k D_{\alpha_{kj}}(\ell_{k,j}-\ell_{k-1,j-1})
D_{\beta_{kj}}(\ell_{k-1,j}-\ell_{k,j}),
\end{align*}
where $\ell_{k0}\equiv0$, $\ell_{k-1,k}\equiv p$ and $\ell_{N-1,j}=(p_1+ \dots + p_j)$. The indices $\alpha_{kj}$, $\beta_{kj}$ take the
following values:
\begin{align*}
\alpha_{kj} =\gamma_{2j-1}^{(N-k)}+{\rm i}x_{N-k}, \qquad \beta_{kj} =\gamma_{2j}^{(N-k)}-{\rm i}x_{N-k},
\end{align*}
where we introduced the notations:
\begin{align*}
 a^{(1)}=a^\prime =1-a \qquad\text{ and } \qquad a^{(k+1)}=1-a^{(k)}.
\end{align*}
In many cases, Feynman diagrams can be evaluated diagrammatically. In particular, the computation of diagrams for the scalar product of
$\Psi$ ($\Phi$) functions is based on the successive application of the exchange relation~\eqref{exchange-rel} to the diagram.

Let us consider the scalar product of two functions $\Psi^{(N),\epsilon}_{p, x}$ and $\Psi^{(N),\epsilon'}_{q, y}$
\begin{align}\label{BBproduct}
\Big(\Psi^{(N),\epsilon'}_{q, y},\Psi^{(N),\epsilon}_{p, x}\Big) & = \pi \delta^2(p-q) (p\bar p)^{\epsilon+\epsilon'}
I^{\epsilon,\epsilon'}(x,y),
\end{align}
where
\begin{align}\label{Ie}
I^{\epsilon,\epsilon'}(x,y) & = \frac1\pi (p\bar p)^{-\epsilon -\epsilon'}\idotsint \delta^{(2)}\bigg(p-\sum_k p_k\bigg)
 \Psi^{(N),\epsilon}_{x}(\vec{p}) \left(\Psi^{(N),\epsilon'}_{y}(\vec{p}) \right)^\dagger \prod_{j=1}^N {\rm d}^2p_j.
\end{align}
The function $I^{\epsilon,\epsilon'}_p(x,y)$ is given by the Feynman diagram shown in Figure~\ref{diag:scalarproducts} in Appendix~\ref{sect:Diagram}
 (left panel), which
is a multidimensional integral
\begin{align}\label{intI}
I^{\epsilon,\epsilon'}_p(x,y) = \idotsint\mathcal I^{\epsilon\epsilon'}_{x,y}(p,\{\ell_{pr}\}|\gamma)\prod_{p,r=1}^{N-1} {\rm d}^2\ell_{pr}
\end{align}
with the integrand given by the product of the propagators. The diagram can be evaluated in a~closed form by successively applying the
exchange relation~\eqref{exchange-rel}, that is equivalent to calculating the loop integrals in a certain order. The answer takes the form
\begin{align}
I^{\epsilon,\epsilon'}(x,y) & = \mathscr C_N(\gamma)
 \boldsymbol\Gamma\left[\frac{\epsilon+\epsilon' + {\rm i} X -{\rm i}\bar Y^*}
 {\epsilon+\epsilon'}\right]
 \frac
 {
 \prod_{k,j=1}^{N-1} \boldsymbol\Gamma[{\rm i}(y_k^*-\bar x_j)]}{
 \prod_{k=1}^{N-1} \bar \phi_N(\bar x_k) (\phi_N(y_k))^*}
\notag\\
&=\mathscr C_N(\gamma)
 \boldsymbol\Gamma\left[
 \frac{\epsilon+\epsilon' + {\rm i}\bar X -{\rm i}Y^*}{\epsilon+\epsilon'}
 \right]
 \frac{
 \prod_{k,j=1}^{N-1} \boldsymbol\Gamma[{\rm i}(\bar y_k^* - x_j)]}
 {\prod_{k=1}^{N-1} \phi_N( x_k) (\bar \phi_N(\bar y_k))^*},\label{twoform}
\end{align}
where $X=\sum_{k=1}^{N-1} x_k$, $Y=\sum_{k=1}^{N-1} Y_k$ and
\begin{gather*}
 \phi_N( x) =\boldsymbol\Gamma\bigl[\gamma_{2N-3} -{\rm i}x,\gamma^{(1)}_{2N-4} -{\rm i}x, \gamma_{2N-5} -{\rm i}x,\dots, \gamma_{N}^{(N-3)}-{\rm i} x\bigr],\\
\bar \phi_N(\bar x) =\boldsymbol\Gamma\bigl[
\bar \gamma_{2N-3} -{\rm i}\bar x, \bar \gamma^{(1)}_{2N-4} -{\rm i}\bar x, \bar \gamma_{2N-5} -{\rm i}\bar x,\dots,
\bar\gamma_{N}^{(N-3)}-{\rm i}\bar x\bigr].
\end{gather*}
For the sign factor $\mathscr C_N(\gamma)$, we get
\begin{align}\label{BBCN}
\mathscr C_{N}(\gamma_1,\gamma_2,\dots,\gamma_{2N-2})
=
\begin{cases}
1, & \text{odd } N,
\\
(-1)^{\sum_{k=1}^{N-3}\bigl[\gamma_{2N-2-k}^{(k-1)}-\gamma_{N}^{(N-3)}\bigr]},
& \text{even } N.
\end{cases}
\end{align}
Here $[a]\equiv a-\bar a$. Details of the calculation can be found in Appendix~\ref{app:scalarproduct}.

Let us show now that integrations in~\eqref{intI} can be done in an arbitrary order. The integrand in~\eqref{intI},
$\mathcal I^{\epsilon\epsilon'}_{x,y}(p,\{\ell_{pr}\}|\gamma)$, is given by
the product of the propagators $D_{\alpha}(k)$, with each index being of the form $\alpha= \frac12 +\frac{n}2 + {\rm i}\sigma$, momentum $k$
being a linear combination of loop momenta,~$\ell_{ij}$, and the external momentum $p$. Since
\begin{align*}
\big | D_\alpha(k)\big | =\big | k^{-\alpha} \bar k^{-\bar \alpha}\big | = |k|^{-1 +2 \operatorname{Im}\sigma}=D_{\nicefrac12-\operatorname{Im}\sigma}(k)
\end{align*}
then for the parameters $\gamma$ satisfying the unitarity condition~\eqref{gammaunitarity}, and $x_k$, $y_k$ having the form
\begin{align}\label{xydef}
x_k={{\rm i}n_k}/{2}+\nu_k, \qquad y_k={{\rm i}m_k}/{2}+\mu_k,
\end{align}
one obtains for the modulus of the integrand
\begin{align*}
\big|\mathcal I^{\epsilon\epsilon'}_{x,y}(p,\{\ell_{pr}\}|\gamma)\big| =
\mathcal I^{\epsilon\epsilon'}_{\underline{x},\underline{y}}(p,\{\ell_{pr}\}|\underline{\gamma})>0,
\end{align*}
where the underlined variables are: $\underline{\gamma}=(\nicefrac12,\dots,\nicefrac12)$,
\begin{align*}
(\underline{x})_k=\operatorname{Im}(\nu_k)=\epsilon_k, \qquad (\underline{y})_k=\operatorname{Im}(\mu_k)=\epsilon'_k.
\end{align*}
Thus the integral of $\vert\mathcal I^{\epsilon\epsilon'}_{x,y}(p,\{\ell_{pr}\}|\gamma)\vert$ is a particular case of the integral
 \eqref{intI} which
was calculated by performing loop integrations in a certain order. Since all integrals converge under the conditions
\begin{align*}
\epsilon_{kj}\equiv \epsilon_k+\epsilon'_j >0\qquad\text{for}\quad k,j=1,\dots, N-1\qquad\text{and} \qquad \epsilon+\epsilon'
>\sum_{k=1}^{N-1} (\epsilon_k+\epsilon'_k),
\end{align*}
by Fubini theorem, the integral~\eqref{intI} exists and the integrations can be done in an arbitrary order.

The following statements can immediately be deduced from this result:
\begin{itemize}\itemsep=0pt
\item For any bounded function $\varphi(p,x)$ with a finite support the function
\begin{align}\label{Psixepsilon}
\Psi^\epsilon_\varphi = \idotsint \varphi(p,x) \Psi^{(N),\epsilon}_{p, x^\epsilon} {\rm d}^2p \mathcal D x_1\cdots \mathcal D x_{N-1},
\end{align}
where $x^\epsilon =(x_1 + {\rm i}\epsilon_1,\dots, x_{N-1}+ {\rm i}\epsilon_{N-1})$, $x_k={\rm i}n_k/2+\nu_k$, $\epsilon_k>0$,
 $\epsilon>\sum_{k=1}^{N-1} \epsilon_k$ and
\[
 \int \mathcal D x_k \equiv \sum_{n_k=-\infty}^\infty \int_{-\infty}^{\infty} {\rm d}\nu_k,
\]
belongs to the Hilbert space ${\mathbb H}_N$, $\|\Psi^\epsilon_\varphi\|^2 <\infty$, for sufficiently small $\epsilon$.

\item It follows from the finiteness of the integral $I^{\epsilon,\epsilon'}_p(x,y)$, equation~\eqref{Ie}, that the function
 \smash{$\Psi_{x}^{(N),\epsilon}(\vec p)$}, equation~\eqref{PsiNdefinition}, exists almost for all $\vec{p}$ for the separated variables $x_k$
 close to the real axis:
 \[
 \operatorname{Im} \nu_k = \frac12\operatorname{Im}(x_k+\bar x_k) \sim 0 \qquad \text{for all $k$}
 \]
 and $\Psi_{x}^{(N),\epsilon}(\vec p)$ is a continuous function of $\nu_k$ in this region. Indeed, let us fix $m<N$ and put
 $u_m=\text{Re}\nu_m$ and $v_m=\operatorname{Im}\nu_m $, $ |v_m| <\delta$. One gets the following estimate for the
 integrand~\eqref{intI}:
\begin{align}\label{estimate1}
|\mathcal J_x^{\epsilon}(\{p_k\},\{\ell_{ij}\})|< |\mathcal J_{x_+}^{\epsilon}(\{p_k\},\{\ell_{ij}\})|
+|\mathcal J_{x_-}^{\epsilon}(\{p_k\},\{\ell_{ij}\})|,
\end{align}
where $x_\pm$ are defined as follows: for $k\neq m$ $(x_\pm)_k=x_k$ and for $k=m$ $(x_\pm)_m=u_m \pm i\delta$. The integrals of the
functions on the right-hand side of~\eqref{estimate1} are finite for sufficiently small~$\delta$. It follows then from the Lebesgue theorem that the
function $\Psi_{x}^{(N),\epsilon}(\vec p)$ is continuous in the variable $\nu_m$.\footnote{Since the integrand is analytic function of
$\nu_k$ $\Psi_{x}^{(N),\epsilon}(\vec p)$ is an analytic function of $\nu_k$ in the vicinity of the real axis.}
\end{itemize}

The scalar product of the functions $\Psi^{(N)}_{p,y}$ and $\Phi^{(N)}_{x}$ constructed in~Section~\ref{subs:AN} can be calculated in a
similar way. Note that there is no need to introduce ``$\epsilon$" regulator here. The corresponding integral is absolutely convergent when
$\operatorname{Im} (\nu_k +\mu_j)>0$ for all $k$, $j$ ($x_k$, $y_j$ given by~\eqref{xydef}).
The scalar product takes the form
\begin{align}
\big(\Psi^{(N)}_{p,y}|\Phi^{(N)}_{x} \big) ={}& \mathrm C^{AB}_N(\gamma) |p|^{N-1}
 (-{\rm i}p)^{- G_N - {\rm i}X} ({\rm i}\bar p)^{-\bar G_N - {\rm i}\bar X}\nonumber\\
&\times\frac{
\prod_{k=1}^N \prod_{j=1}^{N-1} \boldsymbol\Gamma[{\rm i}( \bar y_j^* - x_k)]}
{\left(\prod_{j=1}^{N}\vartheta_N(x_j)\right) \left(\prod_{j=1}^{N-1}\bar\vartheta_N(\bar y_j)\right)^\dagger}
,\label{ABproduct}
\end{align}
where
\begin{align*}
\vartheta_N(x)=
\prod_{k=1}^{N}
\boldsymbol \Gamma\bigl[\gamma^{(k-1)}_{2N-k}-{\rm i} x_j\bigr], \qquad \bar \vartheta_N(\bar x)=
\prod_{k=1}^{N}
\boldsymbol\Gamma\bigl[\bar \gamma^{(k-1)}_{2N-k}-{\rm i}\bar x_j\bigr],
\end{align*}
$G_N=\sum_{k=N}^{2N-1} \gamma_k^{(k)}$, $X=\sum_{k=1}^N x_k$ and
\begin{align*}
\mathrm C^{AB}_N(\gamma_1,\dots,\gamma_{2N-1}) =
\begin{cases}
1, & \text{odd } N,\\
(-1)^{\sum_{k=1}^{N}\bigl[\gamma_{2N-k}^{(k-1)}-\gamma_{N}^{(N-1)}\bigr]},
&
\text{even } N.
\end{cases}
\end{align*}
Similar to the previous case one can argue that $\Phi^{(N)}_{x}$ is a continuous function of $\nu_k$ in the vicinity of the real axis.

Finally, the scalar product of the functions $\Psi^{(N+1)}_{p,x}(z_1,\dots, z_{N+1})$ and $\Psi^{(N)}_{q_1,y}(z_1,\dots,z_N)\otimes
\Psi^{(1)}_{q_2}(z_{N+1})$ which we need in the proof of Theorem~\ref{theorem}, takes the form
\begin{gather}
\big(\Psi^{N}_{q_1,y}\otimes \Psi^{(1)}_{q_2},\Psi^{(N+1)}_{p,x}\big) =
\mathrm C_{NN+1}(\gamma) \pi\delta^{(2)}(p-q_1-q_2)
|p|^N |q_1|^{N-1}
\nonumber\\
\qquad\times
({\rm i}p)^{-\bar G_{N+1}^* }
(-{\rm i}\bar p)^{-G_{N+1}^* } ({\rm i} q_2)^{-\gamma'_{2N}}(-{\rm i}\bar q_2)^{-\bar \gamma'_{2N}}
(-{\rm i} q_1)^{-G_N } ({\rm i}\bar q_1)^{-\bar G_N }
\nonumber
\\
\qquad\times
\left(1+\frac {q_1} {q_2}\right)^{{\rm i}\bar Y^*}
\left(1+\frac{\bar q_1}{\bar q_2}\right)^{{\rm i}Y^*}
\left(-\frac {q_2} {q_1}\right)^{i X}
\left(-\frac{\bar q_2}{\bar q_1}\right)^{i \bar X}
\nonumber\\
\qquad\times \frac{\prod_{k=1}^{N-1}\prod_{j=1}^{N}\boldsymbol\Gamma\left[{\rm i}(\bar y_k^*-x_j)\right]}
{ \left(\prod_{j=1}^N\prod_{k=1}^{N-1} \boldsymbol\Gamma\bigl[\gamma_{2N-k}^{(k-1)}-{\rm i}x_j\bigr]\right)
\left(\prod_{k=1}^N\prod_{j=1}^{N-1} \boldsymbol\Gamma\bigl[\bar\gamma_{2N-k}^{(k-1)}-{\rm i}\bar y_j\bigr]\right)^\dagger },\label{bbN}
\end{gather}
where
\begin{align}\label{ANdef}
G_N=\sum_{m=N}^{2N-1}\gamma_m^{(m)}, \qquad G_{N+1}=G_N-\gamma_N^{(N)}=\sum_{m=N+1}^{2N-1}\gamma_m^{(m)}
\end{align}
and
\begin{align*}
\mathrm C_{NN+1}(\gamma_1,\dots,\gamma_{2N})=\begin{cases}
1, & \text{for odd}\ N,\\
 (-1)^{\sum_{k=1}^{N-1}\bigl[\gamma_{2N-k}^{(k-1)}-\gamma_N^{(N-1)}\bigr]}, & \text{for even}\ N.
\end{cases}
\end{align*}
The calculation is almost the same as in the previous cases so we omit the details.

\section{SoV representation}\label{sect:SoV}

In the previous section, we constructed the functions $\Psi_{p,x}^{(N)}$ and $\Phi^{(N)}_{x}$ associated with the entries~$B_N$ and $A_N$ of
the monodromy matrix~\eqref{monodromymatrix}. For a given vector $\Psi \in {\mathbb H}_N$, we define two functions by projecting it on
\smash{$\Psi_{p,x}^{(N)}$} and \smash{$\Phi^{(N)}_{x}$}:
\begin{align*}
\varphi(p,x_1,\dots,x_{N-1})=\big(\Psi_{p,x}^{(N)},\Psi\big), \qquad
\chi(x_1,\dots,x_N)=\big(\Phi_{x}^{(N)},\Psi\big).
\end{align*}
These functions are symmetric functions of the variables $x$. It was shown by Sklyanin~\cite{MR1239668} that the transformation
$\Psi\mapsto \varphi (\Psi\mapsto \chi)$ reduces the original multidimensional spectral problem for the transfer matrix to the set of
one-dimensional spectral problems that greatly simplifies the analysis. We want to show that the maps $\Psi\mapsto \varphi$ and
$\Psi\mapsto \chi$ can be extended to the isomorphism between the Hilbert spaces, ${\mathbb H}_N \mapsto \mathbb H_{{\rm SoV}}$.

Let us define
\begin{gather}
\left(\varphi_1,\varphi_2\right)_{B_N} =
\int_{\mathbb R\times \mathbb R} \int_{\mathscr D^\sigma_{N-1}} (\varphi_1(p,x))^\dagger \varphi_2(p,x) \mu_{N-1}\left(x\right)
{\rm d}^2p {\rm d}\mu^{B}_{N-1}(x),
\notag\\
\left(\chi_1,\chi_2\right)_{A_N} =
 \int_{\mathscr D^\sigma_{N}} (\chi_1(x))^\dagger \chi_2(x)
 {\rm d}\mu^{A}_{N}(x).\label{symscprod}
\end{gather}
The variables $x_k$, $\bar x_k$ take the form $x_k={\rm i}n_k/2+ \nu_k$, $\bar x_k=-{\rm i}n_k/2+ \nu_k$, where all $n_k$ are either integers or
half-integers,
\begin{align*}
n_k\in \mathbb Z^\sigma\equiv \mathbb Z+\frac\sigma 2, \qquad \sigma=0,1,
\end{align*}
and
\begin{align*}
\mathscr D^\sigma_N \equiv \left(\mathbb R \times Z^\sigma\right)^N.
\end{align*}
The measures are defined as follows:
\begin{align*}
 d\mu^{B(A)}_N(x)=\mu^{B(A)}_N(x) \prod_{k=1}^N\mathcal Dx_k, \qquad \mu^{B(A)}_{N}(x) =c_N^{B(A)} \mu_{N}(x).
\end{align*}
The symbol $\mathcal D x$ stands for
\begin{align*}
\int \mathcal D x \equiv \sum_{n\in \mathbb Z^\sigma} \int_{-\infty}^\infty {\rm d}\nu.
\end{align*}
The weight function $\mu_{N}(x)$ is given by the following expression:
\begin{align*}
\mu_{N}(x_1,\dots,x_N) & =\prod_{1\leq k<j\leq N} x_{kj}\bar x_{kj} =\prod_{1\leq k<j\leq N}\left(\nu_{kj}^2+\frac14 n_{kj}^2\right),
\end{align*}
where $x_{kj}=x_k-x_j$, $\nu_{kj}= \nu_{k}-\nu_j$, $n_{kj}=n_k-n_j$ while the coefficients $c_N^{B(A)}$ take the form
\begin{align*}
\left(c_{N}^B\right)^{-1}=\frac12{(2\pi)^{N+1} N!},\qquad \left(c_N^A\right)^{-1} ={(2\pi)^N N!}.
\end{align*}

Let ${\mathbb H}_N^{B,\sigma}$, ${\mathbb H}_N^{A,\sigma}$ be the Hilbert spaces of symmetric functions corresponding to the scalar
products~\eqref{symscprod}:
\begin{gather*}%\label{HB}\label{HA}
{\mathbb H}_N^{B,\sigma} = L^2(\mathbb R\times\mathbb R)
 \otimes L^2_{\mathrm{sym}}\left( \mathscr D^\sigma_{N-1},
 {\rm d}\mu^{B}_{N-1}(x)\right),
\\
{\mathbb H}_N^{A,\sigma} =
 L^2_{\mathrm{sym}}\left( \mathscr D^\sigma_N,
 {\rm d}\mu^A_{N}(x)\right).
\end{gather*}

Given that $\varphi(p,x)$ and $\chi(x)$ are smooth and compactly supported functions on $\mathbb R^2\times \mathscr D^\sigma_{N-1} $
and $\mathscr D^\sigma_N$, respectively, we introduce transforms $\mathrm T_N^{B}\colon \varphi\mapsto \Psi_\varphi$ and $\mathrm
T_N^{A}\colon\chi\mapsto \Psi_\chi$,
\begin{subequations}
\begin{gather}
\label{TB}
\Psi_\varphi(z)\equiv \big[\mathrm T_N^{B}\varphi\big](z) =
\int_{\mathbb R^2}\int_{\mathscr D^{\sigma}_{N-1}}
\varphi(p,x)
\Psi_{p,x}^{(N)}(z) {\rm d}^2p {\rm d}\mu^{B}_{N-1}(x),
\\
\label{TA}
\Phi_\chi(z)\equiv \big[\mathrm T_N^{A}\chi\big](z) =
\int_{\mathscr D^{\sigma}_N} \chi(x)\Phi_{p,x}^{(N)}(z) {\rm d}\mu^{A}_{N}(x).
\end{gather}
\end{subequations}
Note that the function $\Psi_\varphi$ depends on the vector $\gamma$, equation~\eqref{gammaN}, which appears in the definition of the function
\smash{$\Psi_{p,x}^{(N)}$}. That is $\mathrm T_N^{B} \equiv \mathrm T_N^B(\gamma)$ and the same applies to the operator $\mathrm T_N^A$. In order
to not overload the notation, we do not display this dependence explicitly.

\subsection[B system]{$\boldsymbol{B}$ system}

We begin the proof of the unitarity of the transform $ \mathrm T_N^B $ with the following lemma.

\begin{Lemma}\label{firstlemma}
For any smooth fast decreasing function $\varphi$ on $\mathbb{R}^2\times\mathscr D^\sigma_{N-1}$, the function $ \mathrm T_{N}^{B}\varphi$
belongs to the Hilbert space ${\mathbb H}_N$ and it holds
\begin{align*}%\label{unit-lemma}
\big\|\mathrm T_{N}^{B}\varphi\big\|_{{\mathbb H}_N}^2 =\|{\varphi}\|_{{\mathbb H}_N^{B,\pm}}^2 = \int_{\mathfrak D_N^\pm}
|\varphi(p, x)|^2 {\rm d}^2 p
 {\rm d}\mu_{N-1}^{B}( x).
\end{align*}
\end{Lemma}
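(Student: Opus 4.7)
The plan is to reduce the isometry to an explicit Gustafson-type Mellin--Barnes integral using the $\epsilon$-regularization of \eqref{Psiepsilon} and to extract a symmetrized delta distribution in the limit $\epsilon\to 0$. First, I would replace the kernel $\Psi^{(N)}_{p,x}$ in \eqref{TB} by the regularized $\Psi^{(N),\epsilon}_{p,x^\epsilon}$ with $x^\epsilon_k=x_k+{\rm i}\epsilon_k$, $\epsilon_k>0$, $\epsilon>\sum_k\epsilon_k$ (the other sign choice gives $\mathfrak D_N^-$ by symmetry), so that the bullet point following \eqref{Psixepsilon} guarantees $\Psi^\epsilon_\varphi\in{\mathbb H}_N$. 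The squared norm $\|\Psi^\epsilon_\varphi\|^2_{{\mathbb H}_N}$ is then a quadruple integral whose integrand contains the two-point scalar product $\big(\Psi^{(N),\epsilon}_{q,y^\epsilon},\Psi^{(N),\epsilon}_{p,x^\epsilon}\big)$.

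Second, I would insert the explicit formula \eqref{BBproduct}--\eqref{twoform}. The prefactor $\pi\delta^{(2)}(p-q)$ collapses the $q$-integration, the $(p\bar p)^{2\epsilon}$ cancels the momentum-space factor from \eqref{Psiespilon}, and the resulting expression reads
\[
\big\|\Psi^\epsilon_\varphi\big\|^2_{{\mathbb H}_N}= \int_{{\mathbb R}^2}\!\!d^2p \iint \varphi(p,x)^\dagger \varphi(p,y)\, I^{\epsilon,\epsilon}(x^\epsilon,y^\epsilon)\, d\mu^B_{N-1}(x)\,d\mu^B_{N-1}(y).
\]
The central step is then to recognize the $y$-integral, for each fixed $x$, as a Mellin--Barnes integral of the Gustafson type for the complex field established in \cite{Derkachov:2019ynh,Sarkissian:2023rwb}. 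The regulating factor $\boldsymbol\Gamma[2\epsilon+{\rm i}(X-\bar Y^*)]/\boldsymbol\Gamma[2\epsilon]$ acts as a delta sequence fixing the total separated variable as $\epsilon\to 0$, while the remaining Gamma ratio $\prod_{k,j}\boldsymbol\Gamma[{\rm i}(y_k^*-\bar x_j)]/[\bar\phi_N(\bar x_k)(\phi_N(y_k))^*]$ provides the Plancherel weight and produces, after integration over $y$, the symmetrized distribution $\mu_{N-1}(x)^{-1}(c^B_{N-1})^{-1}\sum_{\sigma\in S_{N-1}} \prod_k\delta^{(2)}(x_k-y_{\sigma(k)})$. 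The Vandermonde factor $\mu_{N-1}(x)$ in $d\mu^B_{N-1}$ cancels $\mu_{N-1}(x)^{-1}$, the symmetry of $\varphi$ absorbs the permutation sum, and the normalization $c^B_{N-1}$ fixed in \eqref{symscprod} drops out, leaving precisely $\int |\varphi(p,x)|^2 d^2p\,d\mu^B_{N-1}(x)$.

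The main obstacle is the rigorous passage $\epsilon\to 0$, since the pointwise limit of $I^{\epsilon,\epsilon}$ is genuinely distributional. I would justify the interchange of the $p$-integral and the limit by dominated convergence, using the fast decay of $\varphi$ together with a uniform-in-$\epsilon$ bound on the iterated $x,y$ integral, obtained by the absolute-convergence argument of \eqref{estimate1} applied to the concatenated Feynman diagram underlying the scalar product. The two sign choices $\pm$ in ${\mathbb H}_N^{B,\pm}$ and $\mathfrak D_N^\pm$ correspond to taking $\epsilon_k$ positive or negative (equivalently, to regularizing $\gamma_{2N-2}$ or $\gamma_1$, cf.\ the footnote after \eqref{Psiepsilon}); both produce the same isometry because the poles of the Gamma factors picked up on opposite contours are related by the Weyl group $S_{N-1}$, under which $\varphi$ is invariant.
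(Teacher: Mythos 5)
Your overall strategy coincides with the paper's: regularize via $\Psi^{(N),\epsilon}_{p,x^\epsilon}$, insert the closed form \eqref{BBproduct}--\eqref{twoform} for the two-point function, argue that $I^{\epsilon,\epsilon'}$ degenerates to a symmetrized delta distribution as the regulators are removed, and finish by density. Two points, however, need repair. First, the delta-extraction is asserted rather than derived, and the tool you name is the wrong one: no Gustafson integral enters this lemma (the $y$-dependence sits in the arbitrary test function $\varphi(p,y)$, so the $y$-integral is not of Mellin--Barnes type; Gustafson's formula \eqref{GustafsonII} is needed only later, in the surjectivity part of Theorem~\ref{theorem}, where the integrand is a pure product of $\boldsymbol\Gamma$'s). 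The actual mechanism is a contour deformation: for $\varphi$ factorized as in \eqref{psifac}--\eqref{ffact}, one shifts the $\nu'_k$-contours into the upper half-plane past the pinching poles of $\prod_{k,j}\boldsymbol\Gamma[{\rm i}((\bar x'_k)^*-x_j)]$, picks up residues, and observes that every term retaining at least one contour integration carries the factor $\boldsymbol\Gamma[\epsilon+\epsilon'+{\rm i}\bar X-{\rm i}(X')^*]/\boldsymbol\Gamma[\epsilon+\epsilon']$, which vanishes on the shifted contour as $\epsilon,\epsilon'\to0$; only the fully localized term $x'_k\to x_{i_k}$ survives. You correctly identify this ratio as the driver of the localization, but without the residue bookkeeping the claimed normalization (cancellation of $\mu_{N-1}$ and $c^B_{N-1}$) is unverified.

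Second, and more seriously, your proposed justification of the limit does not work. Dominated convergence cannot control the $(x,y)$-integral uniformly in $\epsilon$, precisely because $I^{\epsilon,\epsilon}$ is a delta sequence: its modulus concentrates and blows up on the diagonal $x=y$, so no integrable dominant exists; and the bound \eqref{estimate1} concerns the momentum-space integrand defining $\Psi^{(N),\epsilon}_x(\vec p)$ at fixed $x$, not the pairing of $I^{\epsilon,\epsilon}$ against test functions. Moreover, even granting $\lim_{\epsilon\to0}\|\Psi^\epsilon_\varphi\|^2_{{\mathbb H}_N}=K$, this together with the almost-everywhere convergence $\Psi^\epsilon_\varphi\to\Psi_\varphi$ does not by itself give $\|\Psi_\varphi\|^2_{{\mathbb H}_N}=K$; the paper closes this gap by combining Fatou's lemma (which yields $\|\Psi_\varphi\|^2\leq K$) with the expansion of $\|\Psi_\varphi-\Psi^\epsilon_\varphi\|^2\geq0$ (which yields the reverse inequality). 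Some such two-sided argument is indispensable and is absent from your plan. Finally, a notational remark: the superscripts $\pm$ in ${\mathbb H}_N^{B,\pm}$ and $\mathfrak D_N^\pm$ are the paper's (inconsistent) labels for the integer/half-integer sectors $\sigma=0,1$ of the discrete variables $n_k$, not a choice of sign for the regulators $\epsilon_k$ or of which $\gamma$ is deformed, so your closing paragraph addresses a dichotomy that is not actually present in the statement.
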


\begin{proof} Let $\Psi_\varphi^\epsilon$ be a function defined by equation~\eqref{TB} with $\Psi_{p,x}^{(N)}$ replaced by
$\Psi_{p,x^\epsilon}^{(N),\epsilon}$, see equations~\eqref{Psiespilon} and \eqref{Psixepsilon}. It can be shown that
\smash{$\Psi_\varphi^\epsilon(\vec{p})\underset{\epsilon\to 0}{\mapsto} \Psi_\varphi(\vec{p})$} almost everywhere. Next, taking into account
equation~\eqref{BBproduct} one gets
\begin{align}
(\Psi_\varphi^\epsilon,\Psi_{\varphi'}^{\epsilon'})_{{\mathbb H}_N}
 ={}&\pi \int {\rm d}^2p \int {\rm d}\mu_{N-1}^{B}(x)\nonumber\\
 &\times\int {\rm d}\mu_{N-1}^{B}(x') (p\bar p)^{\epsilon+\epsilon'}
 \varphi(p,x)(\varphi^\prime(p,x'))^\dagger I^{\epsilon,\epsilon'}(x,x'),\label{PsiPsi}
\end{align}
with $ I^{\epsilon,\epsilon'}(x,x')$ given by equation~\eqref{twoform}.
 Let us assume that the function $\varphi(\varphi^\prime)$
has the form
\begin{align}\label{psifac}
\varphi(p,x_1,\dots,x_{N-1}) = \kappa(p) \phi(x_1,\dots,x_{N-1}),
\end{align}
where $\phi(x_1,\dots,x_{N-1})$ is a symmetric function
\begin{align}\label{ffact}
\phi(x_1,\dots,x_{N-1}) =
\sum_{S_{N-1}} \phi_1(x_{i_1})\cdots \phi_{N-1}(x_{i_{N-1}})
\end{align}
and the sum goes over all permutations. We also assume that the functions $\phi_k(x_k)=\phi_k(n_k,\nu_k)$ are local in $n_k$,
$\phi(n_k,\nu_k)=\delta_{n_k,m_k}\phi_k(\nu_k)$ and $\phi_k(\nu_k)$ is an analytic function of $\nu_k$ in some strip~$|\operatorname{Im}\nu_k|<\delta_k$ which vanishes sufficiently fast at $\nu_k\to \pm\infty$. Such functions form a dense subspace in the Hilbert
space ${\mathbb H}_N^{B,\sigma}$. Since the momentum integral in \eqref{PsiPsi} factorizes one has to consider the integrals over
 $x_k=(n_k, \nu_k)$,
$x'_k=(n'_k, \nu'_k)$, which have the form
\begin{gather}
 \int {\rm d}\mu_{N-1}^{B}(x)\int {\rm d}\mu_{N-1}^{B}(x') \cdots
\nonumber
 \\ \qquad{}\equiv \prod_{j=1}^{N-1}\sum_{n_j\in \mathbb Z+\frac\sigma2}\sum_{n'_j\in \mathbb Z+\frac\sigma2}
 \int_{-\infty}^\infty\cdots \int_{-\infty}^\infty
 \mu_{N-1}^{B}(\vec{n},\vec{\nu})\mu_{N-1}^{B}(\vec{n}',\vec{\nu}')\prod_{k=1}^{N-1}{\rm d}\nu_k {\rm d}\nu'_k \cdots.\label{SumInt}
\end{gather}
According to our assumptions, only finite number of terms contribute to the sum in~\eqref{SumInt}. Let us study behaviour of a particular
term in the sum in the limit $\epsilon,\epsilon'\mapsto 0$. The functions~$\phi$,~$\phi^\prime$ are smooth and fast decreasing functions
of $\nu$, $\nu'$. The function~$ I^{\epsilon,\epsilon'}(x,x')$ contains the factor~$\boldsymbol \Gamma[\epsilon+\epsilon' +{\rm i}\bar X-{\rm i}(X')^*]/
\boldsymbol \Gamma[\epsilon+\epsilon']$ and the product of the $\boldsymbol \Gamma$-functions
\begin{align}\label{afactor}
\boldsymbol\Gamma\left[{\rm i}((\bar x'_k)^*-x_j)\right] &=
\boldsymbol\Gamma\left[\frac{n'_k}2 - \frac{n_j}2 + {\rm i}(\nu'_k-\nu_j) + \epsilon_{jk}\right]\nonumber\\
&=\frac{\Gamma\bigl(\frac{n'_k}2 - \frac{n_j}2 + {\rm i}(\nu'_k-\nu_j) + \epsilon_{jk}\bigr)}{
\Gamma\bigl(1+\frac{n'_k}2 - \frac{n_j}2 - {\rm i}(\nu'_k-\nu_j) - \epsilon_{jk}\bigr)},
\end{align}
where $\epsilon_{jk}\equiv\epsilon_j+\epsilon'_k$. In the $\epsilon'_k, \epsilon_j\to 0$ this function becomes singular at $\nu'_k =\nu_j$
if $n'_k=n_j$. Let us shift the contours of integrations over $\nu'_k$ variables to the upper half-plane,
$\operatorname{Im}\nu'_k=\delta>\epsilon_{jk}$, and pick up the residues at the corresponding poles. After this, we can send $\epsilon'_k,
\epsilon_j\mapsto 0$. Let us consider a generic contribution arising after this rearrangement. It has the form
\begin{align*}
\int_{C_\delta} \cdots \int_{C_\delta}
 \prod_{k=1}^{M} {\rm d}\nu'_{i_k} f\left( x_1,\dots,x_{N-1}, S(x'_1),\dots, S(x'_{N-1})\right),
\end{align*}
where $ S(x'_k)= x'_k $ if $k \in (i_1,\dots, i_M)$ and $S(x'_k)= x_{p_k}$ if $k$ does not belong to this set. The integrand $f$ is given
by the product of the functions $\phi_k$, $\phi^\prime_k$, $\boldsymbol\Gamma$-functions~\eqref{afactor} and the factor~${A=\boldsymbol
\Gamma[\epsilon+\epsilon' +{\rm i}\bar X-{\rm i}(X')^*]/ \boldsymbol \Gamma[\epsilon+\epsilon']}$.
 All these factors
are regular on the contours of integration. Moreover, if $M\geq 1$ the last factor, $A$, tends to zero at $\epsilon,\epsilon'\mapsto 0$.
Thus the only non-vanishing contribution comes from the term with $M=0$, i.e., when all $x'_k\mapsto x_{i_k}$ for $k=1,\dots, N-1$. It
takes the form
\begin{align*}
(\Psi_\varphi^\epsilon,\Psi_{\varphi'}^{\epsilon'})_{{\mathbb H}_N}
 = \int {\rm d}^2p \int {\rm d}\mu_{N-1}^{B}(x)
 \varphi(p,x)(\varphi^\prime(p,x))^\dagger
 +O(\epsilon+\epsilon')
\end{align*}
that results in the following estimate for the norm of the function $\Psi^\epsilon_\varphi$:
\begin{align*}
\|\Psi^\epsilon_\varphi\|^2_{{\mathbb H}_N} = K + O(\epsilon),
\end{align*}
where
\begin{align*}
K =\|{\varphi}\|_{{\mathbb H}_N^{B,\sigma}}^2 \equiv \int_{\mathbb R^2}\int_{\mathscr D^{\sigma}_{N-1}}
|\varphi(p, x)|^2 {\rm d}^2 p
 {\rm d}\mu_{N-1}^{B}( x).
\end{align*}
Since $\Psi_\varphi^\epsilon(\vec{p})\mapsto \Psi_\varphi(\vec{p})$ at $\epsilon \to 0$, it follows from Fatou's theorem that
$\|\Psi_\varphi\|^2_{{\mathbb H}_N} < K$. At the same time, the inequality{\samepage
\begin{align*}
\|\Psi_\varphi-\Psi_\varphi^\epsilon \|^2_{{\mathbb H}_N} \geq 0
\end{align*}
implies $\|\Psi_\varphi\|^2_{{\mathbb H}_N}\geq K$ that results in $\|\Psi_\varphi\|^2_{{\mathbb H}_N}=K$.}

Since the set of functions~\eqref{psifac}, \eqref{ffact} is dense in the Hilbert spaces ${\mathbb H}_N^{B\pm}$, the transformation~$\mathrm
T_N^{B}$ can be extended to the entire Hilbert space ${\mathbb H}_N^{B,\pm}$ and equation~\eqref{unit-BA} holds for any function $\varphi \in \mathbb{H}_N^{B,\pm}$.
\end{proof}

Taking this result into account we formulate the following theorem.

\begin{Theorem}\label{theorem}
The map $\mathrm T_N^B$ defined in equation~\eqref{TB} can be extended to the linear bijective isometry of the Hilbert spaces, $\mathbb
H_N^{B,\sigma} \mapsto \mathbb H_N$, i.e.,
\begin{subequations}
\begin{align}\label{unit-BA}
\big\|\mathrm T_{N}^{B}\varphi\big\|_{{\mathbb H}_N}^2 &=\|{\varphi}\|_{{\mathbb H}_N^{B,\sigma}}^2
\\
\intertext{and}
\label{unit-Range}
\mathcal R\left( \mathrm T_{N}^{B}\right)
&= {\mathbb H}_N.
\end{align}
\end{subequations}
\end{Theorem}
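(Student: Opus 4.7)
My plan is to take Lemma~\ref{firstlemma} as input: it already delivers the isometry identity \eqref{unit-BA} on the dense subspace of smooth factorizable symmetric functions, and by a standard continuous-extension argument this upgrades to all of $\mathbb H_N^{B,\sigma}$. Since $\mathrm T_N^B$ is an isometry, its range $\mathcal R(\mathrm T_N^B)$ is automatically closed, so the substantive content of the theorem is the surjectivity statement \eqref{unit-Range}. I would attack this by induction on the chain length $N$.

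For $N=1$ the claim is just Fourier completeness on $L^2(\mathbb C)$, since $\Psi_p^{(1)}(z)=\pi^{-1/2}\mathrm e^{\mathrm i(pz+\bar p\bar z)}$. For the inductive step I would exploit the tensor-product factorization $\mathbb H_{N+1}=\mathbb H_N\otimes \mathbb H_1$: the inductive hypothesis (together with the $N=1$ case) tells us that the two-level system $\{\Psi_{q_1,y}^{(N)}\otimes \Psi_{q_2}^{(1)}\}$ is already complete in $\mathbb H_{N+1}$. It therefore suffices to show that any $\Psi\in\mathbb H_{N+1}$ satisfying $(\Psi_{p,x}^{(N+1)},\Psi)_{\mathbb H_{N+1}}=0$ for almost every $(p,x)$ is also orthogonal to every tensor product $\Psi_{q_1,y}^{(N)}\otimes \Psi_{q_2}^{(1)}$, and hence vanishes.

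The bridge between these two orthogonality statements is the explicit scalar-product formula \eqref{bbN}, whose right-hand side is a product of $\boldsymbol\Gamma$-functions in the variables $x-\bar y^*$, together with fractional power factors in $q_1,q_2,q_1+q_2$. The key move is to multiply the assumed relation $(\Psi_{p,x}^{(N+1)},\Psi)=0$ by the complex conjugate of the kernel in \eqref{bbN} and integrate against the SoV measure $d^2p\,d\mu_N^B(x)$. After the contour shift $\nu_k\mapsto\nu_k+\mathrm i\epsilon_k$ that the analyticity and continuity in $\nu_k$ established at the end of Section~\ref{sect:etc} permit, the $x$-integration collapses to a complex Mellin--Barnes integral of exactly the Gustafson type evaluated in \cite{Derkachov:2019ynh,Sarkissian:2023rwb}. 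That identity produces the desired delta functions in $(q_1,q_2)$ and on $\mathscr D_{N-1}^\sigma$, forcing $(\Psi_{q_1,y}^{(N)}\otimes \Psi_{q_2}^{(1)},\Psi)=0$ for all choices of $(q_1,q_2,y)$, and thereby $\Psi=0$.

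The hard part is analytic rather than algebraic. One must justify the interchange of the $x$-integration with the internal loop integrals defining $\Psi_{p,x}^{(N+1)}$ (using the absolute-convergence bound and the Fubini argument from Section~\ref{sect:etc}), check that the $\nu_k$-contour deformation crosses none of the moving poles of the $\boldsymbol\Gamma$-product in \eqref{bbN}, and send $\epsilon\to 0$ at the end to reach the unitarity line $x_k=\bar x_k^*$. A further bookkeeping point is that the Gustafson identity is naturally $S_N$-symmetric and must be matched with the Weyl-symmetrized measure $\mu_N(x)$; this is what fixes the combinatorial constants $(c_N^B)^{-1}=\tfrac12(2\pi)^{N+1}N!$ entering $d\mu_N^B(x)$. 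Once these technical points are discharged, the Gustafson evaluation does all the remaining work, closely paralleling the $\mathrm{SL}(2,\mathbb R)$ argument of \cite{Derkachov:2021wja}.
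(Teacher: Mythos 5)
Your proposal is correct and follows essentially the same route as the paper: isometry from Lemma~\ref{firstlemma} plus density, reduction of surjectivity to $\ker\big(\mathrm T_{N+1}^B\big)^*=0$ by induction on $N$ with the Fourier transform as base case, and an inductive step whose computational core is the composition of two copies of the mixed overlap \eqref{bbN} over the SoV measure evaluated by the degenerate Gustafson integral \eqref{GustafsonII} --- the paper packages this as $\Vert S_N\phi\Vert=\Vert\phi\Vert$ for $S_N=\big(\mathrm T_{N+1}^B\big)^*\big(\mathrm T_N^B\otimes\mathrm T_1^B\big)$ on a dense set of factorized $\phi$, using the $\Omega_Z$ regulator to justify the interchanges of integration, which is the smeared version of your delta-function identity. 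The one point to phrase carefully is that ``multiplying $(\Psi^{(N+1)}_{p,x},\Psi)=0$ by the conjugate of \eqref{bbN} and integrating'' only yields $(\Psi^{(N)}_{q_1,y}\otimes\Psi^{(1)}_{q_2},\Psi)=0$ after $\Psi$ has first been expanded in the tensor basis via the inductive hypothesis (otherwise the integral computes a projection onto $\mathcal R\big(\mathrm T_{N+1}^B\big)$ and the conclusion is vacuous); with that reading your argument coincides with the paper's.
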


\begin{proof} Equation~\eqref{unit-BA} is a direct consequence of Lemma~\ref{firstlemma}. It implies that $\big\|\mathrm T_{N}^{B}\big\|=1$,
hence $\mathcal R\big( \mathrm T_{N}^{B}\big)$ is a closed subspace in ${\mathbb H}_N$ and \smash{${\mathbb H}_N= \mathcal R\big( \mathrm
T_{N}^{B}\big)\oplus \mathcal R\big( \mathrm T_{N}^{B}\big)^\perp$}. Since \smash{$\mathcal R\big( \mathrm T_{N}^{B}\big)^\perp =
\ker\big(\mathrm T_N^B\big)^*$} in order to prove \eqref{unit-Range} it is enough to show that $ \ker\big(\mathrm T_N^B\big)^*=0$.
\end{proof}

We prove this statement using induction on $N$. For $N=1$, the map $\mathrm T_{N=1}^B$ is a two-dimensional Fourier transform, hence
equation~\eqref{unit-Range} is true. Let us now assume that $\mathcal R\left( \mathrm T_{N}^{B}\right) = {\mathbb H}_N$ and prove that it implies
$\mathcal R\left( \mathrm T_{N+1}^{B}\right)= {\mathbb H}_{N+1}$. As was stated above, it is sufficient to prove that \smash{$\ker\left(\mathrm
T_{N+1}^B\right)^*=0$}. To this end, let us consider the map
\begin{align*}
\mathrm S_N =\left(\mathrm T_{N+1}^B\right)^* \left(\mathrm T_{N}^B\otimes \mathrm T_{1}^B\right), \qquad
{\mathbb H}_{N}^{B,\sigma}\otimes L^2(\mathbb R^2)\overset{\mathrm T_{N}^B\otimes \mathrm T_{1}^B}{\longmapsto} {\mathbb H}_{N+1}
\overset{(\mathrm T_{N+1}^B)^*}{\longmapsto}{\mathbb H}_{N+1}^{B,\sigma}.
\end{align*}
Since by the assumption $\mathrm T_{N}^B\otimes \mathrm T_{1}^B$ is a bijective isometry $\ker \mathrm S_N\!=0$ if and only if ${\ker
\left(\mathrm T_{N+1}^B\right)^* \!= 0}$.

The adjoint operator $\big(\mathrm T_{N+1}^B\big)^*$ is a bounded operator which acts on a vector $\Psi\in {\mathbb H}_{N+1}$ by projecting it on the
eigenfunction \smash{$\Psi^{(N+1)}_{p,x}$},
\begin{align}\label{def-varphi}
\big(\mathrm T_{N+1}^B\big)^* \Psi = \big(\Psi^{(N+1)}_{p,x},\Psi\big)_{{\mathbb H}_{N+1}}=
\big(\Psi^{(N+1)}_{p,x},\mathrm P_{N+1}\Psi\big)_{{\mathbb H}_{N+1}}\equiv\varphi(p,x),
\end{align}
where $\mathrm P_{N+1}$ is the projector on $\mathcal R\big(\mathrm T^B_{N+1}\big)$. It follows from \eqref{def-varphi} that
\begin{align}\label{firstnormestimate}
\Vert \varphi\Vert_{{\mathbb H}_{N+1}^{B,\sigma}}^2 =\int_{\mathbb R^2}\int_{\mathscr D^{\sigma}_{N}}\vert \varphi(p,x)\vert^2 {\rm d}^2p {\rm d}\mu_N^B(x) =
\Vert \mathrm P_{N+1}\Psi\Vert^2_{{\mathbb H}_{N+1}} \leq \Vert \Psi\Vert^2_{{\mathbb H}_{N+1}}.
\end{align}
For $\phi\in {\mathbb H}_{N}^{B,\sigma}\otimes L^2\big(\mathbb R^2\big)$, the function $\Psi_\phi=\big(\mathrm T_{N}^B\otimes \mathrm
T_{1}^B\big)\phi$ reads
\begin{align}\label{TNT1}
 \Psi_\phi(z)
 = \int_{\mathbb{R}^2\otimes\mathbb{R}^2} \int_{\mathscr D^{\sigma}_{N-1}}
 \Psi^{(N)}_{q_1,x}(z_1,\dots,z_N) \Psi^{(1)}_{q_2}(z_{N+1})
\phi(q_1,q_2,x) {\rm d}^2q_1 {\rm d}^2q_2 {\rm d}\mu_{N-1}^B(x).
\end{align}
Replacing $ \Psi^{(N)}_{q_1,x}\mapsto \Psi^{(N),\epsilon}_{q_1,x}$ in \eqref{TNT1}, we define a new function, $\Psi_\phi^\epsilon$.
According to Lemma~\ref{firstlemma}, $\Psi_\phi^\epsilon\underset{\epsilon\to 0^+}{\longrightarrow} \Psi_\phi$ in ${\mathbb H}_{N+1}$ for smooth
rapidly decreasing functions, we obtain
\begin{align}
\varphi(p,x)&=[ S_N \phi](p,x)= \big(\Psi^{(N+1)}_{p,x},\Psi_\phi\big)_{{\mathbb H}_{N+1}}\nonumber\\
&=\lim_{\epsilon\to 0^+}
\big(\Psi^{(N+1)}_{p,x},\Psi^\epsilon_\phi\big)_{{\mathbb H}_{N+1}}\equiv \lim_{\epsilon\to 0^+}\varphi_\epsilon(p,x),\label{varphiSN}
\end{align}
where
\begin{align}\label{varphiSepsilon}
\varphi_\epsilon(p,x) =
\int_{\mathbb R^2\times \mathbb R^2 }\int_{\mathscr D^{\sigma}_{N-1}}
 S^\epsilon_N(p,x|q_1,q_2,x') \phi(q_1,q_2,x') {\rm d}^2q_2 {\rm d}^2q_1 {\rm d}\mu_{N-1}^B(x').
\end{align}
The kernel $S_N^\epsilon$ reads
\begin{align}\label{Sepsilon}
S^\epsilon_N(p,x|q_1,q_2,x') = \big(\Psi^{(N+1)}_{p,x},\Psi^{(N)}_{q_1,x'_\epsilon}\otimes \Psi^{(1)}_{q_2}\big),
\end{align}
see equation~\eqref{bbN}, and $x'_\epsilon = \left(x'_1+{\rm i}\epsilon_1,\dots,x'_{N-1}+{\rm i}\epsilon_{N-1}\right)$. We assume that function $\phi$
takes the form
\begin{align}\label{phidef}
\phi(q_1,q_2,x_1,\dots,x_{N-1}) = \kappa_1(q_1)\kappa_2(q_2) \sum_{S_{N-1}} \phi_1(x_{i_1})\cdots \phi_{N-1}(x_{i_{N-1}}),
\end{align}
where the sum goes over all permutations and that the functions $\phi_k$ are local in ``$n$" variable, that is
$\phi_k(x_k)=\phi_k(n_k,\nu_k)=\delta_{n_k m_k}\phi_{n_k}(\nu_k)$ and $\phi_{n_k}$ are compactly supported.
 The function~$\varphi(p,y)$ does not decrease sufficiently fast for large $y_k$ in order to justify changing the order of integration after
substituting $\varphi_\epsilon(p,y)$ in the form~\eqref{varphiSN}, \eqref{varphiSepsilon} into \eqref{firstnormestimate}. To overcome this
difficulty, we following the lines of~\cite{Derkachov:2021wja}, consider the integral
\begin{align*}
I_Z(\varphi) =
\int_{\mathbb R^2}\int_{\mathscr D^{\sigma}_{N}} |
\varphi(p,y)|^2 \Omega_Z(y) {\rm d}^2p {\rm d}\mu_{N}^B(y),
\end{align*}
where
\begin{align*}
 \Omega_Z(y)=\prod_{k=1}^{N} \frac{\boldsymbol\Gamma\left[ Z+{\rm i}y_k, Z-{\rm i}y_k\right]}
 {\boldsymbol\Gamma\left[Z,Z\right]}, \qquad Z=\bar Z =\frac12 + {\rm i} M.
\end{align*}
For $y_k^*=\bar y_k$ the factor $\Omega$ is a pure phase, $\vert\Omega_Z(y)\vert = 1$ and $\Omega_Z(y)\mapsto 1$ when $M\to \infty$, $y$
is fixed. Since the integral \eqref{firstnormestimate} is convergent,
\begin{align*}
\|\varphi\|^2_{{\mathbb H}_{N+1}^{B,\sigma}} =\lim_{M\to\infty}
\int_{\mathbb R^2}\int_{\mathscr D^{\sigma}_{N}} |
\varphi(p,y)|^2 \Omega_{Z}(y) {\rm d}^2p {\rm d}\mu_{N}^B(y).
\end{align*}

It follows from equations~\eqref{varphiSepsilon}, \eqref{Sepsilon} and \eqref{bbN} that for compactly supported functions $\phi_k$ the function
$f(\nu)=|\varphi_\epsilon(p,y)|^2$ is an analytic function of $\nu_k$ in the vicinity of the real axis for sufficiently large $\nu_k$.
Thus, we can write
\begin{align}\label{IomegaZ}
I_Z(\varphi) =\lim_{\omega\to 0} I^\omega_Z(\varphi) =
\lim_{\omega\to 0}\int_{\mathbb R^2}\int_{\mathscr D^{\sigma,\omega}_{N}} |
\varphi(p,y)|^2 \Omega_{Z-\omega}(y) {\rm d}^2p {\rm d}\mu_{N}^B(y),
\end{align}
where the integration contours over $\nu_k$ are deformed in order to separate the poles due to the Gamma functions, $\boldsymbol\Gamma
\left[ Z-\omega\pm {\rm i}y_k\right]$, in the factor $\Omega$. The integral $I^\omega_Z(\varphi)$ is an analytic function of~$\omega$.
Substituting $\varphi(p,y)$ in~\eqref{IomegaZ} in the form~\eqref{varphiSepsilon}, one can show that for $\text{Re}\omega>1$ the integrals
over $y$ decay fast enough to allow the change of the order of integration over $x$, $x'$ and~$y$. Thus, we obtain
\begin{align}
I^\omega_Z(\varphi) ={}&\lim_{\epsilon,\epsilon'\to 0^+} \int_{\mathbb R^2\times \mathbb R^2}
\int_{\mathscr D^{\sigma}_{N-1}\times \mathscr D^{\sigma}_{N-1}}
 \delta^{(2)}(q_1+q_2-q'_1-q'_2) \phi(q_1,q_2,x)
\left(\phi(q'_1,q'_2,x')\right)^\dagger
\nonumber\\
&\times\left|\frac{q'_1}{q_1}\right|^{N-1}
\left|\frac{q_1+q_2}{q_1 q'_2}\right|^{2}
\left(1+\frac{q'_1}{q'_2}\right)^{{\rm i}X'}\left(1+\frac{\bar q'_1}{\bar q'_2}\right)^{{\rm i}\bar X'}\left(1+\frac{q_1}{q_2}\right)^{-{\rm i}X}
\left(1+\frac{\bar q_1}{\bar q_2}\right)^{-{\rm i}\bar X}
\nonumber\\
&\times\left(\frac{q_1}{q'_1}\right)^{G_N}\left(\frac{\bar q_1}{\bar q'_1}\right)^{\bar G_N}
\left(\frac{q'_2}{q_2}\right)^{\gamma_{2N}}\left(\frac{\bar q'_2}{\bar q_2}\right)^{\bar \gamma_{2N}}
R(x,x')
J^{(\epsilon)}_\omega(Z,\zeta,x,x')\nonumber\\
&\times
{\rm d}^2q_1 {\rm d}^2q_2 {\rm d}^2q'_1 {\rm d}^2q'_2 {\rm d}\mu_{N-1}^B(x) {\rm d}\mu_{N-1}^B(x'),\label{Iomega}
\end{align}
where $\zeta = \dfrac{q_1 q'_2}{q_2 q'_1}$, $G_N$ is defined in equation~\eqref{ANdef},
\begin{align*}
R(x,x') =\prod_{k=1}^N\prod_{j=1}^{N-1}
{\boldsymbol\Gamma\left[\bar\gamma^{(k-1)}_{2N-k}-{\rm i} x'_j\right]} / {\boldsymbol\Gamma\left[\bar\gamma^{(k-1)}_{2N-k}-{\rm i} x_j\right]}
\end{align*}
and
\begin{align}
J^{(\epsilon,\epsilon')}_\omega(Z,\zeta,x,x') ={}&\pi^2\int_{\mathscr D^{\omega,\sigma}_{N}}
\zeta^{{\rm i} Y}\bar \zeta^{{\rm i}\bar Y}\prod_{j=1}^N
\frac{\boldsymbol \Gamma[Z-\omega \pm {\rm i} y_j]}{\boldsymbol\Gamma^2(Z)}\nonumber\\
&\times{\prod_{k=1}^{N-1}
\boldsymbol \Gamma[{\rm i}( \bar x_k - \bar y_j)]\boldsymbol \Gamma[{\rm i}( y_j - x'_k)]} {\rm d}\mu_{N}^B(y).\label{Jomega}
\end{align}
We recall that the variables $\nu_k$, $\nu'_k$, ($x_k= {\rm i}n_k/2+\nu_k$, $x'_k={\rm i}n'_k/2+\nu'_k$) have small negative (positive) imaginary parts,
$\operatorname{Im}\nu_k=-\epsilon_k$, $\operatorname{Im}\nu'_k=\epsilon'_k$, which must be send to zero at the end of the calculation.

 The integral~\eqref{Jomega} can be obtained in the closed form with the help of equation~\eqref{GustafsonII}. Indeed,
\begin{align*}
\prod_{1\leq j\neq k\leq N} \frac 1{\boldsymbol \Gamma[{\rm i}(y_k-y_j)]} & =\mu_N(y) (-1)^{\sum_{k<j} [{\rm i}(y_k-y_j)]}
\notag\\
\intertext{and}
\prod_{j=1}^N\prod_{k=1}^{N-1}\boldsymbol \Gamma[{\rm i}(\bar x_k -\bar y_j)] &=
\prod_{j=1}^N\prod_{k=1}^{N-1}\boldsymbol \Gamma[{\rm i}( x_k -y_j)] (-1)^{\sum_{j=1}^N\sum_{k=1}^{N-1} [{\rm i}(y_j-x_k)]},
\end{align*}
where $y_k = {\rm i}m_k/2 + \nu_k$, $\bar y_k = -{\rm i}m_k/2 + \nu_k$ and we recall that $[{\rm i}y_k]= {\rm i}(y_k-\bar y_k)=-m_k$. Taking into account that
\begin{align*}
(-1)^{\sum_{k<j} [{\rm i}(y_k-y_j)]} (-1)^{\sum_{j=1}^N\sum_{k=1}^{N-1} [{\rm i}(y_j-x_k)]} =(-1)^{\sum_{1\leq k<j\leq N-1} [{\rm i}(x_k-x_j)]},
\end{align*}
one finds that the integral \eqref{Jomega} is nothing else as Gustafson's integral~\eqref{GustafsonII} [$u_k\to {\rm i}y_k$ for all~$k$,
 $\{z_1,\dots,z_N\}\mapsto \{{\rm i}x_1,\dots,{\rm i} x_{N-1}, Z-\omega\}$ and $\{w_1,\dots,w_N\} \mapsto \{ -{\rm i}x'_1,\dots,-{\rm i}x'_{N-1}, Z-\omega
 \}$].
Thus, we obtain for $ J^{(\epsilon,\epsilon')}_\omega$,
\begin{align}
J^{(\epsilon,\epsilon')}_\omega(Z,\zeta,x,x') ={}& \pi (-1)^{\sum_{ k<j} [{\rm i}(x_k-x_j)]}
\frac{\zeta^{Z - \omega + {\rm i}X}}{(1 + \zeta)^{2(Z-\omega) + {\rm i}(X - X')}}
 \frac{\bar\zeta^{\bar Z - \omega + {\rm i}\bar X}}{(1 + \bar \zeta)^{2(\bar Z - \omega) + {\rm i}(\bar X - \bar X')}}
 \nonumber\\
 &\times
 \frac{\boldsymbol\Gamma[2Z-2\omega]}{\boldsymbol\Gamma^2[Z]}
\prod_{k=1}^{N-1}
\frac{\boldsymbol\Gamma\left[Z-\omega+{\rm i}x_k,Z-\omega-{\rm i}x_k'\right]}{{\boldsymbol\Gamma[Z,Z]}} \nonumber\\
 &\times
\prod_{k,j=1}^{N-1}\boldsymbol\Gamma[{\rm i}(x_k-x'_j)].\label{Jomega2}
\end{align}
Let us substitute this expression into~\eqref{Iomega} and calculate the corresponding limits. First of all, since all factors containing
$\omega$ are regular at $\omega,\epsilon_k,\epsilon'_k\to 0$ one can interchange the limits and first send $\omega\to 0$.

At $M\to\infty$ the integral over $q$, $q'$ is dominated by the contribution from the stationary point at $\zeta=1$,
\begin{align*}
\frac{\boldsymbol\Gamma[1+2{\rm i}M]}{\boldsymbol\Gamma^2\big[\frac12+{\rm i}M\big]}
\int {\rm d}^2\zeta \frac{\big(\zeta\bar\zeta\big)^{{\rm i}M+\frac12}}{ \big((1+\zeta)\big(1+\bar\zeta\big)\big)^{1+2{\rm i}M} }\varphi(\zeta) & \underset{M\to \infty}{=}
 \pi \varphi(1 )\left( 1 + O\left(\frac1{M^{1/2}}\right)\right).
\end{align*}
Taking this into account and expanding the first factor in the second line in~\eqref{Jomega2}, one gets for equation~\eqref{Iomega}
\begin{align}
I_{\omega=0}(Z) ={} & \lim_{\epsilon,\epsilon'\to 0^+}
\int_{\mathscr D^{\sigma}_{N-1}\times \mathscr D^{\sigma}_{N-1}}\!
 \phi(q_1,q_2,x)
\left(\phi(q_1,q_2,x')\right)^\dagger\pi(-1)^{\sum_{ k<j} [{\rm i}(x_k-x_j)]} {\rm i}^{N-N'}
\nonumber\\
&\times R(x,x') \left(1+\frac{q_1}{q_2}\right)^{{\rm i}(X'-X)}
\left(1+\frac{\bar q_1}{\bar q_2}\right)^{{\rm i}(\bar X'-\bar X)}\left(\frac M 2\right)^{2{\rm i}(\mathcal V-\mathcal V')}
\nonumber\\
&\times
\prod_{k,j=1}^{N-1}\boldsymbol\Gamma[{\rm i}(x_k-x'_j)+ \epsilon_{kj}]
{\rm d}^2q_1 {\rm d}^2q_2 {\rm d}\mu_{N-1}^B(x) {\rm d}\mu_{N-1}^B(x') + \cdots, \label{Iomega=0}
\end{align}
where ellipses stand for terms vanishing at $M\to\infty$ and
\begin{align*}
& x_k=\frac{ {\rm i}n_k}2+ \nu_k, \qquad x'_k= \frac{{\rm i}n'_k}2+ \nu'_k, \qquad \epsilon_{kj}=\epsilon_k+\epsilon'_j, \\
 & X=\sum_{k=1}^{N-1}x_k,
 \qquad \mathcal V=\sum_{k=1}^{N-1}\nu_k, \qquad N=\sum_{k=1}^{N-1}n_k,
\end{align*}
etc. The analysis of this integral is similar to the analysis of the integral~\eqref{PsiPsi}.\footnote{We do it assuming that the
functions $\phi_k(x_k)$ have the properties discussed around equation~\eqref{psifac}.}
 In the limit
$\epsilon,\epsilon'\to 0$ the poles of the Gamma functions, $x_k=x'_j$, approach the integration contour, while all other factors remain
regular. Let us shift the integration contour in $x_k$ to the upper complex half-plane picking up the residues at the poles at $x_k=x'_j$.
We recall that the Gamma functions develop poles only when $n_k=n'_j$, otherwise they are regular at $\nu_k=\nu'_j$. Afterwards, we can send
$\epsilon,\epsilon'\to 0$. The answer is given by the sum of terms
\begin{align*}%\label{MM}
\idotsint
M^{{\rm i}\sum_{k=1}^m (\nu_{i_k}-\nu'_{j_k})}\times f_m(x,x'){\rm d}\nu_{i_1}\cdots {\rm d}\nu_{i_m} {\rm d}\nu'_1\cdots {\rm d}\nu'_{N-1},
\end{align*}
where $f_m(x,x')$ is a smooth function. Note, the contours of integration over $\nu$ variables lay in the upper half-plane, so that
\smash{$\vert M^{{\rm i}\sum_{k=1}^m (\nu_{i_k}-\nu'_{j_k})}\vert <1$} in the integration region. Since the functions~$f_m(x,x')$ are smooth functions
all such terms with $m>0$ vanish after integration in the limit~$M\to\infty$. Thus the only contribution with $m=0$, i.e., when
$x_k=x'_{k_j}$, survives in this limit. Then one obtains after some algebra
\begin{align*}
\Vert\varphi\Vert^2_{{\mathbb H}_{N+1}^{B,\sigma}} & =\Vert S_N\phi\Vert^2_{{\mathbb H}_{N+1}^{B,\sigma}}=
 \int_{\mathbb R^2\times \mathbb R^2}\int_{\mathscr D^{\sigma}_{N-1}}
\vert \phi(q_1,q_2,x)\vert^2 {\rm d}^2q_1 {\rm d}^2q_2 {\rm d}\mu_{N-1}^B(x)\\& = \|\phi\|^2_{{\mathbb H}_{N}^{B,\sigma}\otimes L^2(\mathbb R^2)}.
\end{align*}
Since the space of functions~\eqref{phidef} dense in ${\mathbb H}_{N}^{B,\sigma}\otimes L^2\big(\mathbb R^2\big)$ this relation can be extended to
the whole Hilbert space. Thus one concludes that $\ker S_N=0$, and, hence, $\ker\big(\mathrm T_{N+1}^B\big)^*=0$.

\subsection[A system]{$\boldsymbol{A}$ system}\label{sect:Asystem}

Using the results of the previous section it becomes quite easy to prove the unitarity of $\mathrm T_N^A$ transform. First, we prove an
analogue of the Lemma~\ref{firstlemma}.

\begin{Lemma}\label{secondlemma}
For any smooth fast decreasing function $\chi$ on $\mathscr{D}^\sigma_{N}$ the function $ \mathrm T_{N}^{A}\chi$, equation~\eqref{TA}, belongs
to the Hilbert space ${\mathbb H}_N$ and it holds
\begin{align}\label{unit-lemma-A}
\big\|\mathrm T_{N}^{A}\chi\big\|_{{\mathbb H}_N}^2 =\Vert{\chi}\Vert_{{\mathbb H}_N^{A,\sigma}}^2 = \int_{\mathscr D^{\sigma}_{N}}
|\chi(x)|^2
 {\rm d}\mu_{N}^{A}( x).
\end{align}
\end{Lemma}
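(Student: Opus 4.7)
The plan is to mimic the proof of Lemma~\ref{firstlemma} with the primed layer operators $\Lambda^\prime_n$ in place of $\Lambda_n$ throughout. First I would introduce a regularized eigenfunction $\Phi^{(N),\epsilon}_x$ by shifting, say, $\xi_N\to\xi_N-{\rm i}\epsilon$ in the vector $\gamma$ appearing in the definition of $\Phi^{(N)}_x$, in complete analogy with \eqref{Psiepsilon}, and verify that $\Phi^{(N),\epsilon}_x(\vec p)\to\Phi^{(N)}_x(\vec p)$ almost everywhere as $\epsilon\to 0^+$. Applying the primed version of the exchange relation \eqref{Bexchange} successively to the Feynman diagram representing the scalar product $\big(\Phi^{(N),\epsilon'}_y,\Phi^{(N),\epsilon}_x\big)_{{\mathbb H}_N}$, in exactly the manner of Appendix~\ref{app:scalarproduct}, should produce the $A$-analogue of \eqref{twoform}:
\[
\big(\Phi^{(N),\epsilon'}_{y},\Phi^{(N),\epsilon}_{x}\big)_{{\mathbb H}_N}=\mathscr{C}(\gamma)\,\boldsymbol\Gamma\!\left[\frac{\epsilon+\epsilon'+{\rm i}\bar X-{\rm i} Y^{*}}{\epsilon+\epsilon'}\right]\frac{\prod_{k,j=1}^{N}\boldsymbol\Gamma[{\rm i}(\bar y^{*}_{k}-x_{j})]}{\prod_{k=1}^{N}\vartheta_{N}(x_{k})(\bar\vartheta_{N}(\bar y_{k}))^{*}},
\]
with $X=\sum_{k=1}^{N}x_k$, $Y=\sum_{k=1}^{N}y_k$ and an explicit sign factor $\mathscr{C}(\gamma)$.

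Next I would restrict to $\chi$ in the dense subspace of symmetric functions of the form $\chi(x)=\sum_{S_N}\phi_1(x_{i_1})\cdots\phi_N(x_{i_N})$, where each $\phi_k(n_k,\nu_k)=\delta_{n_k,m_k}\phi_k(\nu_k)$ is local in the discrete label, analytic in $\nu_k$ in a strip around the real axis, and rapidly decreasing at infinity. Substituting the scalar product above into $(\Phi^{\epsilon}_\chi,\Phi^{\epsilon'}_{\chi'})_{{\mathbb H}_N}$ and deforming each $\nu'_k$ contour slightly into the upper half plane separates it from the poles of $\boldsymbol\Gamma[{\rm i}(\bar y^{*}_k-x_j)]$ at $\nu'_k=\nu_j$; one then takes $\epsilon,\epsilon'\to 0^+$ and collects the residues at $x'_k=x_{\pi(k)}$ for various partial matchings $\pi$. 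Every configuration in which fewer than $N$ variables are paired retains the factor $\boldsymbol\Gamma[\epsilon+\epsilon'+{\rm i}\bar X-{\rm i} Y^{*}]/\boldsymbol\Gamma[\epsilon+\epsilon']$, which vanishes in the limit exactly as in Lemma~\ref{firstlemma}. Only the $N!$ total matchings $\pi\in S_N$ survive, and summing them against the symmetric test functions, together with the Vandermonde-type Jacobian $\mu_N(x)$ produced by the residues and the normalization $c^A_N=1/((2\pi)^N N!)$, should reproduce exactly $\int_{\mathscr D^\sigma_N}\overline{\chi(x)}\,\chi'(x)\,{\rm d}\mu^A_N(x)$.

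Finally, Fatou's lemma applied to the pointwise limit $\Phi^{\epsilon}_\chi\to\Phi_\chi$ yields $\|\Phi_\chi\|^2_{{\mathbb H}_N}\le\|\chi\|^2_{{\mathbb H}^{A,\sigma}_N}$, while the trivial inequality $\|\Phi_\chi-\Phi^{\epsilon}_\chi\|^2_{{\mathbb H}_N}\ge 0$ combined with the previous limit gives the reverse inequality; together they establish \eqref{unit-lemma-A} on the dense subspace, and a standard density argument extends it to all smooth fast-decreasing $\chi$. The main obstacle I anticipate is the combinatorial and sign bookkeeping at the residue-picking step: the $A$-case produces $N!$ permutation contributions rather than the $(N-1)!$ of the $B$-case and the associated Jacobian is the full $\mu_N(x)$ instead of $\mu_{N-1}(x)$, so one must verify that the explicit prefactors $\vartheta_N$, the sign $\mathscr{C}(\gamma)$, and the normalization $c^A_N$ conspire precisely to produce ${\rm d}\mu^A_N$ with coefficient exactly one.
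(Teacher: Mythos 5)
Your overall architecture (regularize, compute the overlap of two eigenfunctions in closed form, shift contours, pick residues, kill the unmatched configurations with the vanishing prefactor, then Fatou plus density) is the right template, but the very first step --- a convergent closed-form evaluation of $\big(\Phi^{(N),\epsilon'}_{y},\Phi^{(N),\epsilon}_{x}\big)_{{\mathbb H}_N}$ as a direct analogue of \eqref{twoform} --- does not exist, and this is precisely why the paper does not take this route. The function $\Phi^{(N)}_x$ carries no plane-wave factor: it is an exactly homogeneous function of $(z_1,\dots,z_N)$, and shifting $\xi_N\to\xi_N-{\rm i}\epsilon$ preserves this homogeneity. Consequently the integrand of $\int_{\mathbb C^N}\Phi^{(N),\epsilon}_x\big(\Phi^{(N),\epsilon'}_y\big)^\dagger\prod{\rm d}^2z_k$ is scale covariant, the dilatation zero mode is not pinned by any momentum-conservation delta function (unlike the $B$ case, where $\delta^{(2)}\big(p-\sum p_k\big)$ with $p\neq0$ fixes the overall scale and makes $I^{\epsilon,\epsilon'}$ convergent), and the radial integral is a pure power $\int_0^\infty\lambda^{-1+a+{\rm i}b}\,{\rm d}\lambda$, which diverges for every real $a$, i.e., for every value of your regulator. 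The true $A$--$A$ overlap is $\big(\mu^A_N(x)\big)^{-1}\delta^N(x,x')$ (see the Summary), and its distributional character in the \emph{total} variable $\mathcal V-\mathcal V'$ cannot be produced by the $\boldsymbol\Gamma[\epsilon+\epsilon'+{\rm i}\bar X-{\rm i}Y^*]/\boldsymbol\Gamma[\epsilon+\epsilon']$ mechanism you write down; it requires an extra regulator of the scale integral.

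The paper circumvents this by never forming the $A$--$A$ overlap. It computes instead $\varphi_\chi(p,y)=\big(\Psi^{(N)}_{p,y},\Phi_\chi\big)$ using the \emph{mixed} overlap \eqref{ABproduct}, which is an absolutely convergent integral for $\operatorname{Im}(\nu_k+\mu_j)>0$; it then invokes Theorem~\ref{theorem} (unitarity of $\mathrm T^B_N$, already proved) to write $\Vert\Phi_\chi\Vert^2_{{\mathbb H}_N}=\Vert\varphi_\chi\Vert^2_{{\mathbb H}^{B,\sigma}_N}$, inserts a Gaussian cutoff ${\rm e}^{-\sigma|p|^2}$ to control the residual scale integral (whose $\sigma\to0$ limit supplies exactly the missing overall constraint $\delta_{NN'}\,\delta(\mathcal V-\mathcal V')$), and evaluates the $y$-integral in closed form via the $\mathrm{SL}(2,\mathbb C)$ Gustafson integral \eqref{GustafsonI}. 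Only after that does the contour-shifting and residue analysis you describe take over, exactly as in Theorem~\ref{theorem}, leading to \eqref{mapAnorm}. So your limit analysis and combinatorial worries are aimed at the right place, but the object you propose to feed into them is divergent; to repair the argument you must either add a scale regulator and treat the $A$--$A$ overlap distributionally, or, as the paper does, route the computation through the already-unitarized $B$ system.
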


\begin{proof} The proof is similar to the proof of the Lemma~\ref{firstlemma}. It suffices to prove~\eqref{unit-lemma-A} for
functions of the form
\begin{gather}\label{chidef}
\chi(x_1,\dots,x_{N}) = \sum_{S_{N}} \chi_1(x_{i_1})\cdots \chi_{N}(x_{i_{N}}),
 \qquad \chi_k(x_k)=\chi_k(n_k,\nu_k)=\delta_{n_km_k}\chi_k(\nu_k).
\end{gather}
We assume that the functions $\chi_k(\nu)$ are analytic in some strip near the real axis. Let us calculate the projection
\begin{align}\label{varphiPhi}
\varphi_\chi(p,y) & = \big( \Psi_{p,y}^{(N)}, \Phi_\chi \big) =\lim_{\epsilon\to 0}
 \int_{ \mathscr{D}^\sigma_N } \big( \Psi_{p,y}^{(N)},\Phi_{x+{\rm i}\epsilon}^{(N)}\big) \chi(x) {\rm d}\mu^A_{N}(x).
\end{align}
Here we have given the variables $x_k\to x_k+{\rm i}\epsilon_k$, $\epsilon_k=\bar\epsilon_k>0$ small imaginary parts which allows us to change
the order of integration. In order to show that
 $ \| \varphi_\chi\|_{\mathrm{H}_N^{B,\sigma}} = \| \chi \|_{\mathrm{H}_N^{A,\sigma} } $ we write
\begin{align*}%\label{normA}
\Vert \varphi_\chi\Vert^2_{{\mathbb H}_{N}^{B,\sigma}} &= \int_{\mathbb R^2} \int_{\mathscr D^{\sigma}_{N-1}}
 \vert \varphi(p,y)\vert^2 {\rm d}^2p {\rm d}\mu_{N-1}^B(y)\\
 & =
 \lim_{\sigma\to 0} \int_{\mathbb R^2} {\rm e}^{-\sigma|p|^2}\bigg( \int_{\mathscr D^{\sigma}_{N-1}}\!
 \vert \varphi(p,y)\vert^2 {\rm d}\mu_{N-1}^B(y)\bigg) {\rm d}^2p.
\end{align*}
Using the representation~\eqref{varphiPhi} for $\varphi_\chi(p,y)$, we first evaluate the $y$-integral.\footnote{The $x$, $x'$, $y$ integral can
be interchanged since the integral of modulus is convergent.} This integral coincides with the so-called $\mathrm{SL}(2,\mathbb C)$
Gustafson integral and can be evaluated in a closed form~\eqref{GustafsonI} resulting in
\begin{align*}
\Vert \varphi_\chi\Vert^2_{{\mathbb H}_{N}^{B,\sigma}} ={}&\frac1\pi
 \lim_{\sigma\to 0} \lim_{\epsilon,\epsilon'\to 0^+}\int_{\mathbb R^2} \int_{\mathscr D^{\sigma}_{N}\times \mathscr
D^{\sigma}_{N}} {\rm e}^{-\sigma|p|^2} {\rm i}^{N-N'} p^{{\rm i}(X'-X)-1+\mathcal E+\mathcal E'}
\bar p^{{\rm i}(\bar X'- \bar X)-1+ \mathcal E+\mathcal E'}\\
&\times
 \chi(x) (\chi(x'))^\dagger
\frac{(-1)^{\sum_{k<j}[{\rm i}(x'_k -x'_j)]}}{\prod_{j=1}^{N}\vartheta_N(x_j)(\vartheta_N(x'_j))^\dagger}
\frac{\prod_{k,j=1}^N \boldsymbol\Gamma[{\rm i}(x'_k-x_j)+\epsilon_{jk}]}{\boldsymbol\Gamma[{\rm i}( X'- X )+\mathcal E+\mathcal E']}\\
&\times
 {\rm d}\mu^A_N(x) {\rm d}\mu^A_N(x') {\rm d}^2p,
\end{align*}
where $X=\sum_{k=1}^N x_k$, $N=\sum_{k=1}^N n_k$, $\mathcal E=\sum_{k=1}^N\epsilon_k$, $\epsilon_{jk}=\epsilon_j+\epsilon'_k$, etc. For the
momentum integral, one gets
\begin{align*}
\pi\delta_{NN'} \sigma^{{\rm i}(\mathcal V-\mathcal V') - \mathcal E -\mathcal E'} \Gamma({\rm i}(\mathcal V'-\mathcal V)+\mathcal E +\mathcal E'),
\end{align*}
where $\Gamma$ is Euler's gamma function. Thus
\begin{align*}
\Vert \varphi_\chi\Vert^2_{{\mathbb H}_N^{B,\sigma}} ={}&
\lim_{\sigma\to 0} \lim_{\epsilon,\epsilon'\to 0^+} \int_{\mathscr D^{\sigma}_{N}\times \mathscr
D^{\sigma}_{N}}
(-1)^{\sum_{k<j}[{\rm i}(x'_k -x'_j)]}\delta_{NN'} \sigma^{{\rm i}(\mathcal V-\mathcal V')}
\prod_{k,j=1}^N \boldsymbol\Gamma[{\rm i}(x'_k-x_j)+\epsilon_{jk}]\\
&\times\Gamma(1+{\rm i}(\mathcal V-\mathcal V'))
\frac{\chi(x)}{\Bigl(\prod_{j=1}^{N}\vartheta_N(x_j)\Bigr)}\!
\Bigg(\frac{\chi(x')}{\left(\prod_{j=1}^{N}\vartheta_N(x'_j)\right)}\Bigg)^\dagger\!
 {\rm d}\mu^A_N(x) {\rm d}\mu^A_N(x'),
\end{align*}
where we put $\epsilon_k,\epsilon'_k=0$ in all nonsingular factors. The analysis of this integral in the $\sigma, \epsilon,\epsilon'\to0$
limit is exactly the same as in Theorem~\ref{theorem}, see discussion around equation~\eqref{Iomega=0}, and results~in
\begin{align}\label{mapAnorm}
\Vert \Phi_\chi\Vert^2_{{\mathbb H}_N} & =
\Vert \varphi_\chi\Vert^2_{{\mathbb H}_N^{B,\sigma}} = \int_{\mathscr D^{\sigma}_{N}} \vert\varphi(x)\vert^2 {\rm d}\mu^A_N(x).
\end{align}
Since the space of the functions~\eqref{chidef} is dense in ${\mathbb H}_N^{A,\sigma}$, the relation~\eqref{mapAnorm} extends to the whole
Hilbert space.
\end{proof}

Finally, we formulate the analog of Theorem~\ref{theorem} for the map $\mathrm T_N^A$.
\begin{Theorem}\label{theoremA}
The map $\mathrm T_N^A$ defined in equation~\eqref{TA} can be extended to the linear bijective isometry of the Hilbert spaces, $\mathbb
H_N^{A,\sigma} \mapsto \mathbb H_N$, i.e.,
\[%\label{unit-AA}
\bigl\|\mathrm T_{N}^{A}\chi\bigr\|_{{\mathbb H}_N}^2 =\|{\varphi}\|_{{\mathbb H}_N^{A,\sigma}}^2
\]
and
\begin{align}\label{unit-RangeA}
\mathcal R\bigl( \mathrm T_{N}^{A}\bigr)
= {\mathbb H}_N.
\end{align}
\end{Theorem}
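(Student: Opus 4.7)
The plan is as follows. The norm identity $\|\mathrm T_N^A \chi\|_{{\mathbb H}_N}^2 = \|\chi\|_{{\mathbb H}_N^{A,\sigma}}^2$ is precisely Lemma~\ref{secondlemma}, extended by density to all of ${\mathbb H}_N^{A,\sigma}$. Hence $\mathrm T_N^A$ is an isometry with closed range, and we have the orthogonal decomposition ${\mathbb H}_N = \mathcal R(\mathrm T_N^A) \oplus \ker (\mathrm T_N^A)^*$. Establishing \eqref{unit-RangeA} thus reduces to proving $\ker(\mathrm T_N^A)^* = 0$.

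Next, leverage Theorem~\ref{theorem}. Since $\mathrm T_N^B \colon {\mathbb H}_N^{B,\sigma} \to {\mathbb H}_N$ is a bijective isometry, any $\Psi \in \ker(\mathrm T_N^A)^*$ can be written uniquely as $\Psi = \mathrm T_N^B \varphi$ with $\varphi \in {\mathbb H}_N^{B,\sigma}$. The annihilation condition $(\Phi_x^{(N)}, \Psi)_{{\mathbb H}_N} = 0$ for almost every $x$ then becomes
\[
[\mathrm W^* \varphi](x) \equiv \int_{\mathbb R^2} \int_{\mathscr D^\sigma_{N-1}} \bigl(\Phi_x^{(N)}, \Psi_{p,y}^{(N)}\bigr) \varphi(p,y) {\rm d}^2 p \, {\rm d}\mu^B_{N-1}(y) = 0,
\]
with the explicit kernel read off from~\eqref{ABproduct}. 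Here $\mathrm W^* = (\mathrm T_N^A)^* \mathrm T_N^B$ is the adjoint of the map $\mathrm W = (\mathrm T_N^B)^* \mathrm T_N^A \colon {\mathbb H}_N^{A,\sigma} \to {\mathbb H}_N^{B,\sigma}$, which is the isometry implicitly exhibited in Lemma~\ref{secondlemma} (computing $\varphi_\chi = \mathrm W \chi$). Thus the surjectivity of $\mathrm T_N^A$ is equivalent to $\ker \mathrm W^* = 0$.

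The cleanest route is to show that $\mathrm W^*$ is itself an isometry, $\|\mathrm W^* \varphi\|_{{\mathbb H}_N^{A,\sigma}}^2 = \|\varphi\|_{{\mathbb H}_N^{B,\sigma}}^2$, for $\varphi$ in a dense subspace of factorized functions of the form~\eqref{phidef}. The computation runs in direct analogy with Lemma~\ref{secondlemma}, but with the roles of the $A$-side separated variables $x$ and the $B$-side variables $(p,y)$ interchanged. One expands $\|\mathrm W^* \varphi\|_A^2$ as an integral over $(p,y)$ and $(p',y')$ weighted by $\overline{\varphi(p,y)}\varphi(p',y')$ together with an inner $x$-integral of the two copies of the kernel from~\eqref{ABproduct}; introduces $\epsilon$-regulators on the $x$-variables and an auxiliary convergence factor $\Omega_Z$ of the type used in the proof of Theorem~\ref{theorem}; and applies the $\mathrm{SL}(2,\mathbb C)$ Gustafson integral~\eqref{GustafsonI} to the $x$-integration. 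The regulators are then removed by the residue-picking argument used around~\eqref{PsiPsi} and~\eqref{Iomega=0}, which collapses the $(p',y')$-integral onto a delta function at $(p,y)$. The resulting norm identity forces $\ker \mathrm W^* = 0$, and \eqref{unit-RangeA} follows.

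The principal obstacle lies in the $x$-integration: it is the counterpart of the $y$-integration of Lemma~\ref{secondlemma}, and although the $\mathrm{SL}(2,\mathbb C)$ Gustafson identity~\eqref{GustafsonI} is still the correct tool, the assignment of parameters is different and the regularization must be arranged so that Fubini applies and the contour deformations in the residue-picking step remain controlled. Conceptually nothing beyond Gustafson's integral is required, and the analysis closely parallels the ones already developed in Lemma~\ref{secondlemma} and Theorem~\ref{theorem}.
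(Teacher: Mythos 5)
Your proposal follows the paper's own route essentially verbatim: the norm identity is Lemma~\ref{secondlemma} extended by density, and surjectivity is reduced, via the bijective isometry $\mathrm T_N^B$ of Theorem~\ref{theorem}, to showing that $\mathbf S_N=(\mathrm T_N^A)^*\mathrm T_N^B$ (your $\mathrm W^*$) is an isometry on the dense set of factorized functions, exactly as the paper does. The only minor slip is the identification of the inner integral: with the $\Omega_Z$ convergence factor supplying the extra parameters, the $x$-integration is the degenerate Gustafson integral~\eqref{GustafsonII} --- i.e., the integral~\eqref{Jomega} already evaluated in the proof of Theorem~\ref{theorem} --- rather than~\eqref{GustafsonI}.
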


\begin{proof} As in the Theorem~\ref{theorem}, we only need to prove equation~\eqref{unit-RangeA}. As was discussed, earlier equation~\eqref{unit-RangeA} is equivalent to the statement that $\ker \big(\mathrm T_N^A\big)^*=0$ or to the assertion $\ker \mathbf S_N=0$,
where $\mathbf S_N= \big(\mathrm T_N^A\big)^* \mathrm T_N^B$. In order to prove this, it suffices to show that $\Vert \mathbf S_N
\varphi\Vert_{{\mathbb H}^{A,\sigma}_N} = \Vert \varphi\Vert_{{\mathbb H}^{B,\sigma}_N}$. The proof of this statement repeats step by step the
proof given in the Theorem~\ref{theorem}, and on the technical level is reduced to the evaluation of the integral~\eqref{Jomega}.
\end{proof}

\section{Summary}\label{sect:summary}

In this work, we consider a generic inhomogeneous $\mathrm{SL}(2,\mathbb C)$ spin chain with impurities and construct the eigenfunctions of
the $B$ and $A$ entries of the monodromy matrix. We prove the unitarity of the SoV transform associated with these systems or,
equivalently, the completeness of the corresponding systems in the Hilbert space of the model. Namely, the following identities hold in the
sense of distributions:
\begin{gather*}
\int_{\mathbb R^2} \int_{\mathscr D^\sigma_{N-1}} \Psi^{(N)}_{p,x}(z)\big(\Psi^{(N)}_{p,x}(z')\big)^\dagger
{\rm d}^2p {\rm d}\mu_N^B(x) =\prod_{k=1}^N \delta^2(z_k-z'_k),\\
\int_{\mathscr D^\sigma_N} \Phi^{(N)}_x(z)\big(\Phi^{(N)}_{x}(z')\big)^\dagger {\rm d}\mu_N^A(x) =\prod_{k=1}^N \delta^2(z_k-z'_k),
\end{gather*}
and
\begin{gather*}
\int_{\mathbb C^N} \Psi^{(N)}_{p,x}(z) \big(\Psi^{(N)}_{p',x'}(z)\big)^\dagger \prod_{k=1}^N {\rm d}^2z_k =
 \big(\mu^B_N(x)\big)^{-1}\delta^{2}(p-p') \delta^{N-1}(x, x'),\\
\int_{\mathbb C^N} \Phi^{(N)}_x(z) \big(\Phi^{(N)}_{x'}(z)\big)^\dagger \prod_{k=1}^N {\rm d}^2z_k =
 \big(\mu^A_N(x)\big)^{-1} \delta^N(x, x'),
\end{gather*}
where
\begin{align*}%\label{deltaN}
\delta^N(x, x')=\frac1{N!}\sum_{w\in S_N} \delta^N(x'-wx), \qquad w x=(x_{w_1},\dots, x_{w_N} )
\end{align*}
and \[\delta^N(x'-x)=\prod_{k=1}^N \delta^2(x'_k-x_k), \qquad \delta^2(x'-x)=\delta_{nn'}\delta(\nu-\nu').\]

The method relies heavily on the use of multidimensional Mellin--Barnes integrals which generalize integrals calculated by
R.A.~Gustafson~\cite{Gustafson94}. The attractive feature of our approach is that it does not depends on the details of the spin chain
such as spins and inhomogeneity parameters. We believe that this technique can also be used to prove the unitarity of the SoV transform for
the open $\mathrm {SL} (2,\mathbb C)$ spin chain.

\appendix

\section{The diagram technique}\label{sect:Diagram}

Throughout this paper, we used a diagrammatic representation for the functions under consideration. The calculation of relevant scalar
products is, most conveniently, performed diagrammatically with the help of a few simple identities. Below, we give some of these rules
(see also~\cite{Derkachov:2001yn}).
\begin{enumerate}\itemsep=0pt
\item[(i)] An arrow with the index $\alpha$ directed from $w$ to~$z$ stands for a propagator $D_\alpha(z-w)=[z-w]^{-\alpha}$:

\centerline{\includegraphics[width=0.38\linewidth]{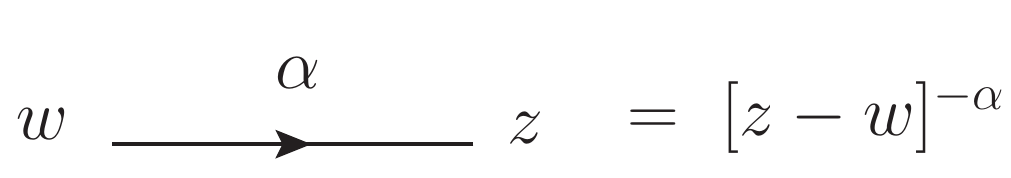}}

\item[(ii)] The Fourier transform reads
\begin{align*}%\label{Fourier}
\int {\rm d}^2 z {\rm e}^{{\rm i}(pz+\bar p\bar z)}D_\alpha(z)=\pi {{\rm i}^{\alpha-\bar\alpha}}a(\alpha) D_{1-\alpha}(p),
\end{align*}
where the function $a(\alpha)\equiv 1/\boldsymbol\Gamma[\alpha]=\Gamma(1-\bar\alpha)/\Gamma(\alpha)$. %%

\item[(iii)] Chain rule
\begin{align*}%\label{Chain}
\int\frac{ {\rm d}^2 w}{[z_1-w]^\alpha [w-z_2]^{\beta}}=
\pi
\frac{a(\alpha,\beta)}{a(\gamma)}
\frac{1}{[z_1-z_2]^{\gamma}},
\end{align*}
where $\gamma=\alpha+\beta-1$. Its diagrammatic form is

\centerline{\includegraphics[width=0.42\linewidth]{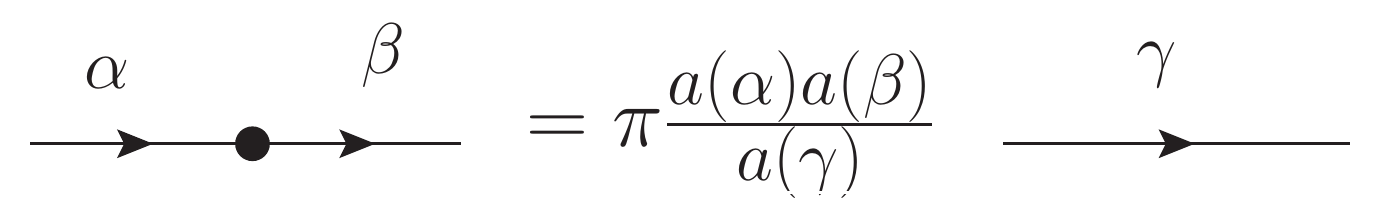}}

\item[(iv)] Star-triangle relation
\begin{equation*}
\vcenter{\hbox{
\includegraphics[width=0.57\linewidth]{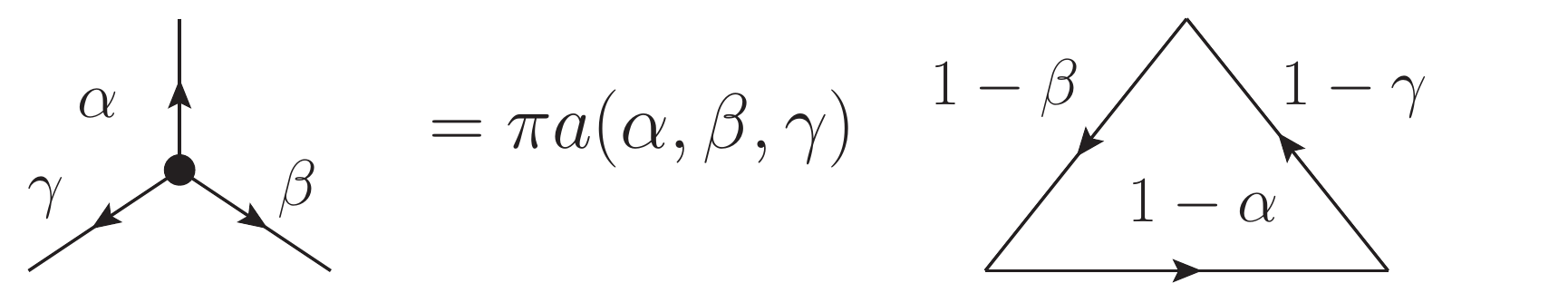}
}}
\end{equation*}
\item[(v)] Exchange relation
\begin{equation}
\label{exchange-rel}
\vcenter{\hbox{\includegraphics[width=0.57\linewidth]{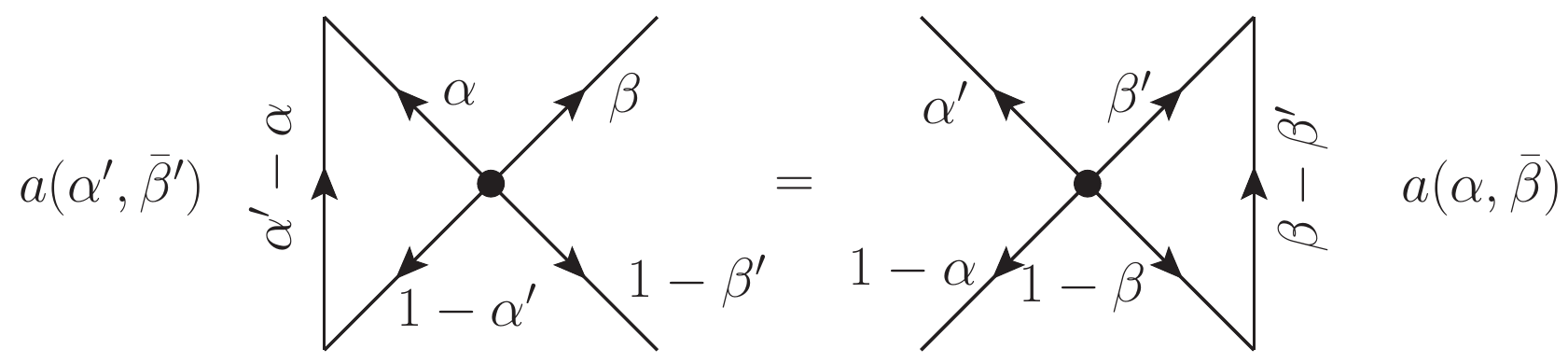}}},
\end{equation}
where $\alpha+\beta=\alpha'+\beta'$.
\end{enumerate}

\section{Scalar products}\label{app:scalarproduct}

Here, we discuss the calculation of scalar products of $\Psi^{(N),\epsilon}_{p,x}$ and $\Phi_{x}^{(N)}$ functions. The diagrams for the
scalar products~\eqref{BBproduct}, \eqref{ABproduct} are shown in Figure~\ref{diag:scalarproducts}. The leftmost vertex on both diagrams has
only two propagators attached to it. We call such a vertex -- free vertex. On the first step one integrates over the free vertex (on both
diagrams) using the chain relation for propagators and move the resulting line to the right with the help of the exchange relation. After
that two new free vertices appear and one repeat the same procedure again. In this way one can integrate over all vertices on the left
edge of both diagrams (they are shown by black blobs). Keeping trace of all factors arising in the process, one represent the initial
diagram $D$ as
\begin{gather}
D_N\big(\{x_1,x_2,\dots\}, \{y_1, y_2, \dots\}, \{\gamma_1, \gamma_2,\dots\} \big) \nonumber\\
\qquad {} = f(x_1,y_1,\gamma) D_N^\prime \big (\{x_2,\dots\}, \{y_2,\dots\}, \{\gamma_3,\dots\} \big ).\label{firststep}
\end{gather}
Taking into account that the function $\Psi_{p,x}^{(N)}$ and $\Phi_x^{(N)}$ are symmetric functions of the separated variables it follows
from \eqref{firststep} that
\begin{align}\label{DCN}
D_N\big(\{x_1,x_2,\dots\}, \{y_1, y_2, \dots\}, \{\gamma_1, \gamma_2,\dots\} \big) = \mathscr C_N(\gamma) \prod_{k,j} f(x_k,y_j,\gamma).
\end{align}
The factor $\mathscr C_N(\gamma)$ does not depend on $x$, $y$ variables. The easiest way to fix it is to evaluate both sides of \eqref{DCN}
for special values of $x$, $y$. For example, one can take $x_k\to x$ and $y_k\to \bar x^*$. Both sides, in this limits, contain divergent
factors, $\boldsymbol\Gamma[{\rm i}(\bar y^*_j- x_k)]$ which cancel out. It is easy check that the result of the integration over any free vertex
in this limit (after removing this singular factor) gives one. Therefore, the equation on $\mathscr C_N(\gamma)$ for the scalar
product~\eqref{twoform} takes the form
\begin{align*}
1=\mathscr C_N(\gamma) (\chi( x) (\bar \chi(\bar x^*))^*)^{N-1}=\mathscr C_N(\gamma)
(-1)^{(N-1)\sum_{k=0}^{N-3}\bigl[\gamma_{2N-3-k}^{(k)}-{\rm i}x\bigr]}.
\end{align*}
Since $\big[\gamma_m^{(k)}-{\rm i}x\big]$ is an integer number, one gets that $\mathscr C_N=1$ for odd $N$, while for even $N$
\begin{align*}
\sum_{k=0}^{N-3}\big[\gamma_{2N-3-k}^{(k)}-{\rm i}x\big] & =\sum_{k=0}^{N-3}\big(\big[\gamma_{2N-3-k}^{(k)} -\gamma_N^{(N-3)}\big]+\big[\gamma_N^{(N-3)}-{\rm i}x\big]\big)
\\
&=\sum_{k=0}^{N-3}\big[\gamma_{2N-3-k}^{(k)} -\gamma_N^{(N-3)}\big] +(N-2)\big[\gamma_N^{(N-3)} - {\rm i}x\big].
\end{align*}
Taking into account that the last term in the above equation is an even number, one gets that~$\mathscr C_N(\gamma)$ is given by the
expression~\eqref{BBCN}. For the second diagram, the analysis follows exactly the same lines.

\begin{figure}[t]
\centering
\includegraphics[width=0.8\linewidth]{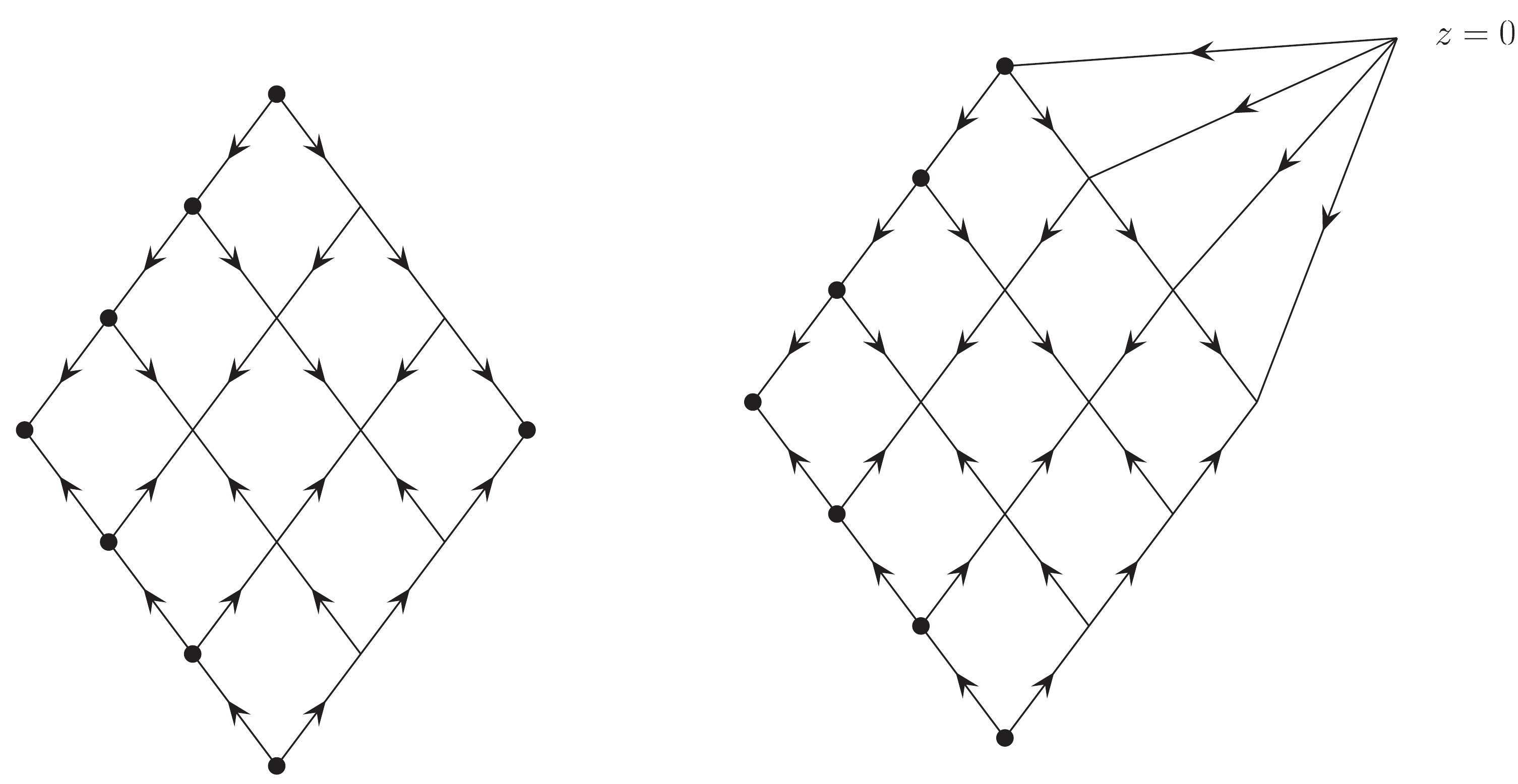}

\caption{Examples of diagrams for scalar products,
equations~\eqref{BBproduct}, \eqref{ABproduct} for $N=4$.}\label{diag:scalarproducts}
\end{figure}

\section{Gustafson's integral reduction}\label{app:gustafson}
The extension of the first Gustafson integral~\cite[Theorem 5.1]{Gustafson94} to the complex case was obtained in~\cite{Derkachov:2019ynh}.
It takes the form
\begin{gather}
\prod_{j=1}^N\sum_{n_j\in\mathbb{Z}+\frac{\sigma}{2}}
\int_{-{\rm i}\infty}^{{\rm i}\infty}
\frac{\prod_{m=1}^{N+1}\prod_{k=1}^{N}
\boldsymbol{\Gamma}(z_m-u_k)\boldsymbol{\Gamma}(u_k+w_m)}{ \prod_{m<j}
\boldsymbol{\Gamma}(u_m- u_j)\boldsymbol{\Gamma}(u_j- u_m)}
\prod_{p=1}^{N}
\frac{{\rm d}\nu_p}{2\pi {\rm i}}\nonumber\\
\qquad=\frac{N!\prod_{k,j=1}^{N+1}\boldsymbol{\Gamma}(z_k+ w_j)}{
\boldsymbol{\Gamma}\left(\sum_{k=1}^{N+1} (z_k+w_k) \right)},\label{GustafsonI}
\end{gather}
where $\boldsymbol \Gamma$ is the Gamma function of the complex field $\mathbb C$~\cite{MR2125927}
\[
\boldsymbol\Gamma(u) \equiv \boldsymbol\Gamma(u,\bar u) = \frac{\Gamma(u)}{\Gamma(1-\bar u)}=\frac1{a(u)}.
\]
The variables $u_k$, $w_m$, $z_m$ have the form
\begin{alignat*}{4}
&u_k=\frac{n_k}2+ \nu_k, \qquad &&z_m=\frac{n_m}2+ x_m, \qquad &&w_m=\frac{\ell_m}2+ y_m,&\\
&\bar u_k=-\frac{n_k}2+ \nu_k, \qquad &&\bar z_m=-\frac{n_m}2+ x_m, \qquad &&\bar w_m=-\frac{\ell_m}2+ y_m.&
\end{alignat*}
and the integration contours over $\nu_k$ separate the series of poles associated with the $\boldsymbol \Gamma$-functions:
$\boldsymbol\Gamma(z_m-u_k)$ and $\boldsymbol{\Gamma}(u_k+w_m)$, see~\cite{Derkachov:2019ynh} for more detail. The integral converges
for \[\sum_{m=1}^{N+1} \mathrm{Re}(z_m+w_m) <1.\]

Let us put
\begin{alignat*}{3}
&z_{N+1} =M\left(\frac12+{\rm i} x\right), \qquad &&\bar z_{N+1} =M\left(-\frac12+{\rm i} x\right),&\\
&w_{N+1} =M'\left(\frac12+{\rm i}x'\right), \qquad &&\bar w_{N+1} =M'\left(-\frac12+{\rm i}x'\right)&
\end{alignat*}
and send $M,M'\to\infty$ keeping $M/M'=\xi$ fixed, so that $w_{N+1}/z_{N+1}\mapsto \zeta$ and $\bar w_{N+1}/\bar z_{N+1}\mapsto \bar
\zeta$.

Dividing both sides of~\eqref{GustafsonI} by $(\boldsymbol \Gamma(z_{N+1})\boldsymbol \Gamma(w_{N+1}))^N$, we get in this limit
\begin{gather}
\frac1{N!}\prod_{j=1}^N\sum_{n_j\in\mathbb{Z}+\frac{\sigma}{2}}
\int_{-{\rm i}\infty}^{{\rm i}\infty} [\zeta]^U
\frac{\prod_{m,k=1}^{N}
\boldsymbol{\Gamma}(z_m-u_k)\boldsymbol{\Gamma}(u_k+w_m)}{ \prod_{m<j}
\boldsymbol{\Gamma}(u_m- u_j)\boldsymbol{\Gamma}(u_j- u_m)}
\prod_{p=1}^{N}\frac{{\rm d}\nu_p}{2\pi {\rm i}}\nonumber\\
\qquad=\frac{[\zeta]^Z}{[1+\zeta]^{Z+W}}\prod_{k,j=1}^{N}\boldsymbol{\Gamma}(z_k+ w_j),\label{GustafsonII}
\end{gather}
where $|\arg \zeta|<\pi$, $Z=\sum_{k=1}^N z_k$, $W=\sum_{k=1}^N w_k$ and we recall that $[\zeta]^U\equiv \zeta^U \bar\zeta^{\bar U}$.

\subsection*{Acknowledgements}

The author is grateful to S.\'{E}.~Derkachov for fruitful discussions and T.A.~Sinkevich for critical remarks.

\pdfbookmark[1]{References}{ref}
\LastPageEnding


\begin{thebibliography}{99}
\footnotesize\itemsep=0pt

\bibitem{Bartels:2011nz}
Bartels J., Lipatov L.N., Prygarin A., Integrable spin chains and scattering
 amplitudes, \href{https://doi.org/10.1088/1751-8113/44/45/454013}{\textit{J.~Phys.~A}} \textbf{44} (2011), 454013, 29~pages,
 \href{https://arxiv.org/abs/1104.0816}{arXiv:1104.0816}.

\bibitem{BytskoTeschner06}
Bytsko A.G., Teschner J., Quantization of models with non-compact quantum group
 symmetry: modular {$XXZ$} magnet and lattice sinh-{G}ordon model,
 \href{https://doi.org/10.1088/0305-4470/39/41/S11}{\textit{J.~Phys.~A}} \textbf{39} (2006), 12927--12981,
 \href{https://arxiv.org/abs/hep-th/0602093}{arXiv:hep-th/0602093}.

\bibitem{Cavaglia:2019pow}
Cavagli\`a A., Gromov N., Levkovich-Maslyuk F., Separation of variables and
 scalar products at any rank, \href{https://doi.org/10.1007/jhep09(2019)052}{\textit{J.~High Energy Phys.}} \textbf{2019}
 (2019), no.~9, 052, 28~pages, \href{https://arxiv.org/abs/1907.03788}{arXiv:1907.03788}.

\bibitem{Derkachov:2018rot}
Derkachov S., Kazakov V., Olivucci E., Basso--{D}ixon correlators in
 two-dimensional fishnet {CFT}, \href{https://doi.org/10.1007/jhep04(2019)032}{\textit{J.~High Energy Phys.}} \textbf{2019}
 (2019), no.~4, 032, 31~pages, \href{https://arxiv.org/abs/1811.10623}{arXiv:1811.10623}.

\bibitem{Derkachov:2019tzo}
Derkachov S., Olivucci E., Exactly solvable magnet of conformal spins in four
 dimensions, \href{https://doi.org/10.1103/physrevlett.125.031603}{\textit{Phys. Rev. Lett.}} \textbf{125} (2020), 031603, 7~pages,
 \href{https://arxiv.org/abs/1912.07588}{arXiv:1912.07588}.

\bibitem{Derkachov:2021rrf}
Derkachov S., Olivucci E., Conformal quantum mechanics \& the integrable
 spinning {F}ishnet, \href{https://doi.org/10.1007/jhep11(2021)060}{\textit{J.~High Energy Phys.}} \textbf{2021} (2021),
 no.~11, 060, 29~pages, \href{https://arxiv.org/abs/2103.01940}{arXiv:2103.01940}.

\bibitem{Derkachov:2020zvv}
Derkachov S., Olivucci E., Exactly solvable single-trace four point correlators
 in {$\chi \mathrm{CFT}_4$}, \href{https://doi.org/10.1007/jhep02(2021)146}{\textit{J.~High Energy Phys.}} \textbf{2021}
 (2021), no.~2, 146, 86~pages, \href{https://arxiv.org/abs/2007.15049}{arXiv:2007.15049}.

\bibitem{Derkachov:1999pz}
Derkachov S.E., Baxter's {$Q$}-operator for the homogeneous {$XXX$} spin chain,
 \href{https://doi.org/10.1088/0305-4470/32/28/309}{\textit{J.~Phys.~A}} \textbf{32} (1999), 5299--5316,
 \href{https://arxiv.org/abs/solv-int/9902015}{arXiv:solv-int/9902015}.

\bibitem{Derkachov:2001yn}
Derkachov S.E., Korchemsky G.P., Manashov A.N., Noncompact {H}eisenberg spin
 magnets from high-energy {QCD}.~{I}. {B}axter {$Q$}-operator and separation
 of variables, \href{https://doi.org/10.1016/S0550-3213(01)00457-6}{\textit{Nuclear Phys.~B}} \textbf{617} (2001), 375--440,
 \href{https://arxiv.org/abs/hep-th/0107193}{arXiv:hep-th/0107193}.

\bibitem{Derkachov:2003qb}
Derkachov S.E., Korchemsky G.P., Manashov A.N., Baxter {$\mathbb Q$}-operator
 and separation of variables for the open {$\mathrm{SL}(2,{\mathbb R})$} spin
 chain, \href{https://doi.org/10.1088/1126-6708/2003/10/053}{\textit{J.~High Energy Phys.}} \textbf{2003} (2003), no.~10, 053,
 31~pages, \href{https://arxiv.org/abs/hep-th/0309144}{arXiv:hep-th/0309144}.

\bibitem{Derkachov:2002tf}
Derkachov S.E., Korchemsky G.P., Manashov A.N., Separation of variables for the
 quantum {$\mathrm{SL}(2,\mathbb R)$} spin chain, \href{https://doi.org/10.1088/1126-6708/2003/07/047}{\textit{J.~High Energy
 Phys.}} \textbf{2003} (2003), no.~7, 047, 30~pages, \href{https://arxiv.org/abs/hep-th/0210216}{arXiv:hep-th/0210216}.

\bibitem{Derkachov:2018lyz}
Derkachov S.E., Kozlowski K.K., Manashov A.N., On the separation of variables
 for the modular {$XXZ$} magnet and the lattice sinh-{G}ordon models,
 \href{https://doi.org/10.1007/s00023-019-00806-2}{\textit{Ann. Henri Poincar\'e}} \textbf{20} (2019), 2623--2670,
 \href{https://arxiv.org/abs/1806.04487}{arXiv:1806.04487}.

\bibitem{Derkachov:2021wja}
Derkachov S.E., Kozlowski K.K., Manashov A.N., Completeness of {S}o{V}
 representation for {$\mathrm{SL}(2,\mathbb R)$} spin chains, \href{https://doi.org/10.3842/SIGMA.2021.063}{\textit{SIGMA}}
 \textbf{17} (2021), 063, 26~pages, \href{https://arxiv.org/abs/2102.13570}{arXiv:2102.13570}.

\bibitem{Derkachov:2014gya}
Derkachov S.E., Manashov A.N., Iterative construction of eigenfunctions of the
 monodromy matrix for {$\mathrm{SL}(2,\mathbb C)$} magnet, \href{https://doi.org/10.1088/1751-8113/47/30/305204}{\textit{J.~Phys.~A}}
 \textbf{47} (2014), 305204, 25~pages, \href{https://arxiv.org/abs/1401.7477}{arXiv:1401.7477}.

\bibitem{Derkachov:2016dhc}
Derkachov S.E., Manashov A.N., Spin chains and {G}ustafson's integrals,
 \href{https://doi.org/10.1088/1751-8121/aa749a}{\textit{J.~Phys.~A}} \textbf{50} (2017), 294006, 20~pages,
 \href{https://arxiv.org/abs/1611.09593}{arXiv:1611.09593}.

\bibitem{Derkachov:2019ynh}
Derkachov S.E., Manashov A.N., On complex gamma-function integrals,
 \href{https://doi.org/10.3842/SIGMA.2020.003}{\textit{SIGMA}} \textbf{16} (2020), 003, 20~pages, \href{https://arxiv.org/abs/1908.01530}{arXiv:1908.01530}.

\bibitem{Derkachov:2018ewi}
Derkachov S.E., Valinevich P.A., Separation of variables for the quantum
 {$\mathrm{SL}(3,\mathbb C)$} spin magnet: eigenfunctions of {S}klyanin
 {$B$}-operator, \href{https://doi.org/10.1007/s10958-019-04505-5}{\textit{J.~Math. Sci.}} \textbf{242} (2019), 658--682,
 \href{https://arxiv.org/abs/1807.00302}{arXiv:1807.00302}.

\bibitem{MR1616371}
Faddeev L.D., How algebraic {B}ethe ansatz works for integrable model, in
 Sym\'etries {Q}uantiques ({L}es {H}ouches, 1995), North-Holland, Amsterdam,
 1998, 149--219, \href{https://arxiv.org/abs/hep-th/9605187}{arXiv:hep-th/9605187}.

\bibitem{Faddeev:1994zg}
Faddeev L.D., Korchemsky G.P., High energy {QCD} as a completely integrable
 model, \href{https://doi.org/10.1016/0370-2693(94)01363-H}{\textit{Phys. Lett.~B}} \textbf{342} (1195), 311--322,
 \href{https://arxiv.org/abs/hep-th/9404173}{arXiv:hep-th/9404173}.

\bibitem{MR2125927}
Gel'fand I.M., Graev M.I., Retakh V.S., Hypergeometric functions over an
 arbitrary field, \href{https://doi.org/10.1070/RM2004v059n05ABEH000771}{\textit{Russian Math. Surveys}} \textbf{59} (2004), 831--905.

\bibitem{MR0207913}
Gel'fand I.M., Graev M.I., Vilenkin N.Y., Generalized functions. {V}ol.~5.
 {I}ntegral geometry and representation theory, Academic Press, New York,
 1966.

\bibitem{GromovRyan20}
Gromov N., Levkovich-Maslyuk F., Ryan P., Determinant form of correlators in
 high rank integrable spin chains via separation of variables, \href{https://doi.org/10.1007/jhep05(2021)169}{\textit{J.~High
 Energy Phys.}} \textbf{2021} (2021), no.~5, 169, 79~pages,
 \href{https://arxiv.org/abs/2011.08229}{arXiv:2011.08229}.

\bibitem{GromovSizov17}
Gromov N., Levkovich-Maslyuk F., Sizov G., New construction of eigenstates and
 separation of variables for {$\mathrm {SU}(N)$} quantum spin chains,
 \href{https://doi.org/10.1007/jhep09(2017)111}{\textit{J.~High Energy Phys.}} \textbf{2017} (2017), no.~9, 111, 39~pages,
 \href{https://arxiv.org/abs/1610.08032}{arXiv:1610.08032}.

\bibitem{Gromov:2022waj}
Gromov N., Primi N., Ryan P., Form-factors and complete basis of observables
 via separation of variables for higher rank spin chains, \href{https://doi.org/10.1007/jhep11(2022)039}{\textit{J.~High
 Energy Phys.}} \textbf{2022} (2022), no.~11, 039, 53~pages,
 \href{https://arxiv.org/abs/2202.01591}{arXiv:2202.01591}.

\bibitem{Gustafson92}
Gustafson R.A., Some {$q$}-beta and {M}ellin--{B}arnes integrals with many
 parameters associated to the classical groups, \href{https://doi.org/10.1137/0523026}{\textit{SIAM~J. Math. Anal.}}
 \textbf{23} (1992), 525--551.

\bibitem{Gustafson94}
Gustafson R.A., Some {$q$}-beta and {M}ellin--{B}arnes integrals on compact
 {L}ie groups and {L}ie algebras, \href{https://doi.org/10.2307/2154615}{\textit{Trans. Amer. Math. Soc.}}
 \textbf{341} (1994), 69--119.

\bibitem{MR626693}
Gutzwiller M.C., The quantum mechanical {T}oda lattice.~{II}, \href{https://doi.org/10.1016/0003-4916(81)90253-0}{\textit{Ann.
 Physics}} \textbf{133} (1981), 304--331.

\bibitem{MR763763}
Izergin A.G., Korepin V.E., The quantum inverse scattering method approach to
 correlation functions, \href{https://doi.org/10.1007/BF01212350}{\textit{Comm. Math. Phys.}} \textbf{94} (1984), 67--92,
 \href{https://arxiv.org/abs/cond-mat/9301031}{arXiv:cond-mat/9301031}.

\bibitem{MR1751619}
Kharchev S., Lebedev D., Integral representation for the eigenfunctions of a
 quantum periodic {T}oda chain, \href{https://doi.org/10.1023/A:1007679024609}{\textit{Lett. Math. Phys.}} \textbf{50} (1999),
 53--77, \href{https://arxiv.org/abs/hep-th/9910265}{arXiv:hep-th/9910265}.

\bibitem{Kharchev:2000yj}
Kharchev S., Lebedev D., Integral representations for the eigenfunctions of
 quantum open and periodic {T}oda chains from the {QISM} formalism,
 \href{https://doi.org/10.1088/0305-4470/34/11/317}{\textit{J.~Phys.~A}} \textbf{34} (2001), 2247--2258, \href{https://arxiv.org/abs/hep-th/0007040}{arXiv:hep-th/0007040}.

\bibitem{MR1741654}
Kitanine N., Maillet J.M., Terras V., Correlation functions of the {$XXZ$}
 {H}eisenberg spin-{$\frac{1}{2}$} chain in a~magnetic field, \href{https://doi.org/10.1016/S0550-3213(99)00619-7}{\textit{Nuclear
 Phys.~B}} \textbf{567} (2000), 554--582, \href{https://arxiv.org/abs/math-ph/9907019}{arXiv:math-ph/9907019}.

\bibitem{MR677006}
Korepin V.E., Calculation of norms of {B}ethe wave functions, \href{https://doi.org/10.1007/BF01212176}{\textit{Comm.
 Math. Phys.}} \textbf{86} (1982), 391--418.

\bibitem{Kozlowski15}
Kozlowski K.K., Unitarity of the {S}o{V} transform for the {T}oda chain,
 \href{https://doi.org/10.1007/s00220-014-2134-6}{\textit{Comm. Math. Phys.}} \textbf{334} (2015), 223--273, \href{https://arxiv.org/abs/1306.4967}{arXiv:1306.4967}.

\bibitem{MR671263}
Kulish P.P., Sklyanin E.K., Quantum spectral transform method. {R}ecent
 developments, in Integrable {Q}uantum {F}ield {T}heories ({T}v\"arminne,
 1981), \textit{Lecture Notes in Phys.}, Vol. 151, \href{https://doi.org/10.1007/3-540-11190-5_8}{Springer}, Berlin, 1982,
 61--119.

\bibitem{Lipatov:1993qn}
Lipatov L.N., High-energy scattering in {QCD} and in quantum gravity and
 two-dimensional field theories, \href{https://doi.org/10.1016/0550-3213(91)90512-V}{\textit{Nuclear Phys.~B}} \textbf{365} (1991),
 614--632.

\bibitem{Lipatov:1993yb}
Lipatov L.N., Asymptotic behavior of multicolor QCD at high energies in
 connection with exactly solvable spin models, \textit{JETP Lett.} \textbf{59}
 (1994), 596--599, \href{https://arxiv.org/abs/hep-th/9311037}{arXiv:hep-th/9311037}.

\bibitem{Lipatov:2009nt}
Lipatov L.N., Integrability of scattering amplitudes in {$\mathcal{N}=4$}
 {SUSY}, \href{https://doi.org/10.1088/1751-8113/42/30/304020}{\textit{J.~Phys.~A}} \textbf{42} (2009), 304020, 25~pages,
 \href{https://arxiv.org/abs/0902.1444}{arXiv:0902.1444}.

\bibitem{MailletNiccoli18}
Maillet J.M., Niccoli G., On quantum separation of variables, \href{https://doi.org/10.1063/1.5050989}{\textit{J.~Math.
 Phys.}} \textbf{59} (2018), 091417, 47~pages, \href{https://arxiv.org/abs/1807.11572}{arXiv:1807.11572}.

\bibitem{MR3983970}
Maillet J.M., Niccoli G., Complete spectrum of quantum integrable lattice
 models associated to {$\mathcal U_q(\widehat{{\mathfrak{gl}}_n)}$} by
 separation of variables, \href{https://doi.org/10.1088/1751-8121/ab2930}{\textit{J.~Phys.~A}} \textbf{52} (2019), 315203,
 39~pages, \href{https://arxiv.org/abs/1811.08405}{arXiv:1811.08405}.

\bibitem{MailletNiccoli19}
Maillet J.M., Niccoli G., On quantum separation of variables beyond fundamental
 representations, \href{https://doi.org/10.21468/scipostphys.10.2.026}{\textit{SciPost Phys.}} \textbf{10} (2021), 026, 38~pages,
 \href{https://arxiv.org/abs/1903.06618}{arXiv:1903.06618}.

\bibitem{Maillet:2020ykb}
Maillet J.M., Niccoli G., Vignoli L., On scalar products in higher rank quantum
 separation of variables, \href{https://doi.org/10.21468/SciPostPhys.9.6.086}{\textit{SciPost Phys.}} \textbf{9} (2020), 086,
 64~pages, \href{https://arxiv.org/abs/2003.04281}{arXiv:2003.04281}.

\bibitem{Ryan:2021duf}
Ryan P., Integrable systems, separation of variables and the {Y}ang--{B}axter
 equation, Ph.D.~Thesis, {T}rinity College Dublin, 2021, \href{https://arxiv.org/abs/2201.12057}{arXiv:2201.12057}.

\bibitem{RyanVolin19}
Ryan P., Volin D., Separated variables and wave functions for rational
 {$\mathfrak{gl}(N)$} spin chains in the companion twist frame,
 \href{https://doi.org/10.1063/1.5085387}{\textit{J.~Math. Phys.}} \textbf{60} (2019), 032701, 23~pages,
 \href{https://arxiv.org/abs/1810.10996}{arXiv:1810.10996}.

\bibitem{Ryan:2020rfk}
Ryan P., Volin D., Separation of variables for rational {$\mathfrak{gl}(n)$}
 spin chains in any compact representation, via fusion, embedding morphism and
 {B}\"acklund flow, \href{https://doi.org/10.1007/s00220-021-03990-7}{\textit{Comm. Math. Phys.}} \textbf{383} (2021), 311--343,
 \href{https://arxiv.org/abs/2002.12341}{arXiv:2002.12341}.

\bibitem{Sarkissian:2023rwb}
Sarkissian G.A., Elliptic and complex hypergeometric integrals in quantum field
 theory, \href{https://doi.org/10.1134/S1547477123030603}{\textit{Phys. Part. Nuclei Lett.}} \textbf{20} (2023), 281--286.

\bibitem{Semenov-Tian-Shansky1994}
Semenov-Tian-Shansky M.A., Quantization of open Toda lattices, in Dynamical
 {S}ystems~{VII}, Encyclopaedia Math. Sci., \href{https://doi.org/10.1007/978-3-662-06796-3_8}{Springer}, Berlin, 1994, 226--259.

\bibitem{Silantyev}
Silantyev A.V., Transition function for the {T}oda chain, \href{https://doi.org/10.1007/s11232-007-0024-1}{\textit{Teoret. and
 Math. Phys.}} \textbf{150} (2007), 315--331, \href{https://arxiv.org/abs/nlin.SI/0603017}{arXiv:nlin.SI/0603017}.

\bibitem{MR802110}
Sklyanin E.K., The quantum {T}oda chain, in Nonlinear {E}quations in
 {C}lassical and {Q}uantum {F}ield {T}heory ({M}eudon/{P}aris, 1983/1984),
 \textit{Lecture Notes in Phys.}, Vol. 226, \href{https://doi.org/10.1007/3-540-15213-X_80}{Springer}, Berlin, 1985, 196--233.

\bibitem{MR1239668}
Sklyanin E.K., Quantum inverse scattering method. {S}elected topics, in Quantum
 {G}roup and {Q}uantum {I}ntegrable {S}ystems, \textit{Nankai Lectures Math. Phys.},
 World Scientific, River Edge, NJ, 1992, 63--97.

\bibitem{MR549615}
Sklyanin E.K., Tahtadzhan L.A., Faddeev L.D., Quantum inverse problem
 method.~{I}, \href{https://doi.org/10.1007/BF01018718}{\textit{Teoret. and Math. Phys.}} \textbf{40} (1980), 688--706.

\bibitem{MR1007797}
Slavnov N.A., Calculation of scalar products of wave functions and form-factors
 in the framework of the algebraic {B}ethe ansatz, \href{https://doi.org/10.1007/BF01016531}{\textit{Teoret. and Math.
 Phys.}} \textbf{79} (1989), 232--240.

\bibitem{MR562799}
Tahtadzhan L.A., Faddeev L.D., The quantum method for the inverse problem and
 the {$XYZ$} {H}eisenberg model, \href{https://doi.org/10.1070/RM1979v034n05ABEH003909}{\textit{Russian Math. Surveys}} \textbf{34}
 (1979), no.~5, 11--68.

\bibitem{Valinevich:2016cwq}
Valinevich P.A., Derkachev S.E., Kulish P.P., Uvarov E.M., Construction of
 eigenfunctions for a system of quantum minors of the monodromy matrix for an
 {$\mathrm{SL}(n,\mathbb{C})$}-invariant spin chain, \href{https://doi.org/10.1134/S0040577916110015}{\textit{Teoret. and Math.
 Phys.}} \textbf{189} (2016), 1529--1553.

\bibitem{Wallach92}
Wallach N.R., Real reductive groups.~{II}, \textit{Pure Appl. Math.}, Vol.~132,
 Academic Press, Inc., Boston, MA, 1992.

\end{thebibliography}
\end{document}